\documentclass{amsart}
\usepackage[a4paper,margin=2.5cm]{geometry}
\usepackage[T1]{fontenc}
\usepackage{amssymb, amsmath, amsthm}
\usepackage{enumitem}
\usepackage{mathtools}
\usepackage{xcolor}
\usepackage[colorlinks=true]{hyperref}
\usepackage{url}

\usepackage[backend=bibtex,style=alphabetic,sorting=nyt,isbn=false,url=false,doi=true,maxalphanames=10,minalphanames=4,mincitenames=4,maxcitenames=10,minnames=4,maxnames=10,giveninits=true,maxbibnames=99]{biblatex}
\AtEveryBibitem{\clearlist{language}}
\renewbibmacro{in:}{}
\theoremstyle{plain}
\newtheorem{theorem}{Theorem}[section]
\newtheorem{conjecture}[theorem]{Conjecture}
\newtheorem{proposition}[theorem]{Proposition}
\newtheorem{lemma}[theorem]{Lemma}
\newtheorem{corollary}[theorem]{Corollary}
\theoremstyle{definition}
\newtheorem{definition}[theorem]{Definition}

\newtheorem{remark}[theorem]{Remark}

\def\oM{\overline{\mathcal{M}}}
\def\sfQ{\mathbf{q}}
\def\sfB{\mathbf{b}}
\def\sfD{\mathbf{d}}

\def\cxdeg{\mathbb{C}\deg}
\newcommand{\Coeff}[1]{\mathop{\mathrm{Coeff}}\nolimits_{\left[#1\right]}}

\title[Examples of the second bi-Hamiltonian structure]{Examples of the second bi-Hamiltonian structure for non-semisimple partial cohomological field theories}

\title[Lifts of CohFTs]{Lifts of partial cohomological field theories and examples of bi-Hamiltonian structures in the non-semisimple case}

\author{Guido Carlet}
\address
{G.~C.: Université Bourgogne Europe, CNRS, IMB UMR 5584, 21000 Dijon, France}
\email{guido.carlet@ube.fr}

\author{Dimitrios Makris}
\address
{D.~M.: Université Bourgogne Europe, CNRS, IMB UMR 5584, 21000 Dijon, France}
\email{dimitrios.makris@ube.fr}

\author{Sergey Shadrin}
\address{S.~S.: Korteweg--de Vriesinstituut voor Wiskunde, Universiteit van Amsterdam, Postbus 94248, 1090GE Amsterdam, Nederland}
\email{s.shadrin@uva.nl}

\begin{document}

\begin{abstract} 
We define the lift of a partial cohomological field theory with respect to a Frobenius algebra and the corresponding lift of local polyvector fields, extending the lift procedure proposed by Della Vedova, Lorenzoni, and Savoldi. This allows us to systematically produce examples of non-semisimple homogeneous partial cohomological field theories whose integrable systems possess a second Hamiltonian structure, thus confirming the conjecture of Buryak et al. on an explicit formula for the second Poisson bracket in new non-semisimple cases. Moreover we present some relations of these lift constructions with the Morimoto theory of the lift of geometric structures to the Weil bundle of infinitely near points associated to a local algebra. 
\end{abstract}

\maketitle

\tableofcontents

\section{Introduction}

\subsection{Dubrovin-Zhang and DR hierarchies} 

The theory of Dubrovin and Zhang associates a bi-Hamiltonian hierarchy of evolutionary PDEs to a semisimple Dubrovin-Frobenius manifold~\cite{dubrovin2001normalformshierarchiesintegrable,LiuWangZhang}, see also~\cite[Introduction]{lorenzoni2026generalisedbihamiltonianstructureshydrodynamic} for a recent survey. It can be further extended to (partial) cohomological field theories at the cost of the second Hamiltonian structure, see~\cite{BPS1,BS22,BLS-DRDZ,BSS25} and a recent survey in~\cite[Preface]{blot2026descendantsintegrableobservablescohomological}. There is an open question whether in the case the underlying partial cohomological field theory is (quasi-)homogenenous (or, in other terminology, conformal), one can still endow the resulting system of flows with the second Hamiltonian structure. Indeed, in the dispersionless limit that corresponds to the genus $0$ part of the corresponding partial cohomological field theory there is a formula due to Dubrovin~\cite{Dub-TFT}, known as Dubrovin's bracket, that doesn't use semisimplicity. 

There is an alternative approach to the construction of a Hamiltonian system of flows associated to partial cohomological field theories, the so-called DR hierarchies, originally invented by Buryak~\cite{Buryak-DR}, and further developed by Buryak and Rossi~\cite{Buryak-Rossi-recursion,BR-spin}, partly in collaboration with Dubrovin and Gu\'er\'e~\cite{BDGR1,BDGR20}. In the theory of DR hierarchies, in the case of a conformal underlying (partial) cohomological field theory, there is a conjectural explicit formula for the Poisson bracket that would supply the second bi-Hamiltonian structure proposed by Buryak et al.~in~\cite{BRS-bracket}. 

The two theories are known to be related by Miura transformations~\cite{BS22,BLS-DRDZ,BSS25}, and this allows to transfer the results from the Dubrovin-Zhang side to DR side and vice versa. This was used, in the semisimple case, by Buryak and Rossi in~\cite{Buryak-Rossi-proof-bracket}: they used the result of~\cite{LiuWangZhang} on the bi-Hamiltonian structure on the Dubrovin-Zhang side in order to prove the conjectural formula~\cite{BRS-bracket} for the second bracket on the DR side in the case of semisimple cohomological field theories. It is important to stress that the methods of~\cite{LiuWangZhang} cannot be extended beyond this case --- this way of reasoning cannot be applied to partial cohomological field theories, and out of reach at the moment in the non-semisimple setting. 

\subsection{New cases of bi-Hamiltonian structure}

The goal of this paper is to provide new cases of established bi-Hamiltonian structures in the non-semisimple setting. The idea comes from the work of Della Vedova, Lorenzoni, and Savoldi~\cite{DVLS}, where they systematically construct non-semisimple examples using the differential-geometric procedure of lift to the tangent bundle due to Yano and Kobayashi~\cite{YanoKoba-1}. We extend their idea in two different ways: we use a more general concept of lift to the bundle of infinitely near points in the sense of Weil, due to Morimoto~\cite{Morimoto-jetbundle,Morimoto-Weil}, and we apply it to the underlying cohomological field theories, in all genera, and construct this way non-semisimple partial cohomological field theories. 

This approach wouldn't give any immediate new results on the Dubrovin-Zhang side of the story. But luckily, on the DR side  the full structure of both Poisson brackets and Hamiltonians, as well as the Hamiltonian reduction, are presented as explicit formulas in terms of the integrals of the underlying partial cohomological field theory.\footnote{Very recently, the same idea of moving the computational complexity to the DR side of the DR/DZ correspondence was also used in~\cite{liu2026finitegroupreductiondrdz} in a different context.}
This allows us to apply the Morimoto lift to all these structures as well and obtain new instances of bi-Hamiltonian structures from the ones proved in the semisimple case in~\cite{Buryak-Rossi-proof-bracket} (thus resolving the conjecture of~\cite{BRS-bracket}, or rather its natural generalization to the case of partial cohomological field theories, in these new cases). Then, of course, one can transfer these results back to the Dubrovin-Zhang side using the Miura equivalence. 

\subsection{Morimoto lift} In the 1960-70s, Morimoto~\cite{Morimoto-Weil} developed a method of prolonging  tensor fields and connections from a manifold $M$ to the so called bundle of infinitely near points of $A$-kind in the sense of Weil. 

Let $A$ be an unital, commutative associative algebra over $\mathbb{R}$, of finite dimension. We call $A$ a local algebra (in the sense of Weil) if it admits a unique maximal ideal $\mathfrak{m}$ such that the quotient $A/\mathfrak{m}$ is one dimensional and such that there exists a non-negative integer $h$ such that $\mathfrak{m}^{h+1} =0$ and $\mathfrak{m}^h \neq0$ .  Important examples of local algebras are finite-dimensional quotients of rings of formal power series in several variables. In fact, every local algebra is isomorphic to such a model.

Let $A$ be a local algebra, $M$ a manifold and $x\in M$. An algebra homomorphism $x':C^\infty(M)\rightarrow A$ is called an $A$-point of $M$ at x if $x'(F) = F(x) \quad \text{mod}\ \mathfrak{m}$. We denote by $M^A_x$ the set of  $A$-points at $x$ and by $M^A=\sqcup_{x\in M} M^A_x$. Moreover, if $\Phi:M \rightarrow N$ is a smooth map, we define $\Phi^A:M^A \rightarrow N^A$ as $\Phi^A(x')(g) = x'(g \circ \Phi)$, for $x'\in M^A_x$ and $g\in C^\infty(N)$. The $A$-lift of a function $F:M \rightarrow \mathbb{R}$ is the  
$A$-valued function $F^A:M^A \rightarrow A\simeq \mathbb{R}^A$. In local coordinates, we have the following formula for the $A$-lift 
\begin{equation} \label{eq:mori-lift}
    F^A = \sum_{m=0}^h \frac{1}{m!}\sum_{2\leq l_1,\dots,l_m \leq L} u^{\alpha_1,l_1}\dots u^{\alpha_m,l_m} \frac{\partial^m F}{\partial u^{\alpha_1}\dots \partial u^{\alpha_m}} f_{l_1}\dots f_{l_m}
\end{equation}
where $\{u^\alpha\}_{\alpha=1,\dots,n}$ are local coordinates on $M$, $\{1=f_1,\dots,f_L\}$ is a basis of $A$ such that $\{f_2,\dots,f_L\}$ is a basis for the maximal ideal $\mathfrak{m}$, and $\{u^{\alpha,l}\}_{\alpha=1,\dots,n;~ l=1,\dots L}$ is the induced system of coordinates on $M^A$, namely, $x'(u^\alpha)=\sum_{l=1}^L u^{\alpha,l} f_l$. (Here and below we always assume summation over repeated indices.)

If $A$ is, in addition, a Frobenius algebra, we can obtain a function $F^c \in C^\infty (M^A)$ that is natural to call the \emph{complete lift}, by composing the $A$-lift with the trace on $A$. Note, however, that this latter concept is not explicitly discussed in the context of infinitely near $A$-points bundle lift in~\cite{Morimoto-Weil}, but rather in the special case of the so-called $(\lambda)$-lift in~\cite{Morimoto-jetbundle}, which we discuss in more detail below. 

 When $A=\mathbb{R}[z]/(z^{2})$, the bundle $M^A$ coincides with the tangent bundle $TM$. In this case~\cite{YanoKoba-1}, the complete lift has a nice intuitive interpretation as the the first jet of the function $f$
 \begin{equation}
     F^c([\gamma] ) =  \frac{d (F\circ \gamma)(\tau)}{d\tau}\bigg|_{\tau =0}
 \end{equation}
where $[\gamma]$ is the germ of a smooth path $\gamma=\gamma(\tau)$ on $M$. This point of view extends to the case of tangent bundles of order $r$ \cite{Morimoto-jetbundle} by taking $A=\mathbb{R}[z] / (z^{r+1})$. The general theory of complete lift on tangent bundles of higher order is well-known and has several applications in differential geometry, such as the theory of affine symmetric spaces and $G$-structures.

\subsection{Applications of the Morimoto lift} We apply below, in the case of Frobenius algebras, a procedure that might be seen as a direct analog of the {complete lift} of Morimoto to partial cohomological field theories and to local polyvector fields, as well as their interaction though the explicit formulas of the theory of DR hierarchies~\cite{Buryak-DR,BRS-bracket}. 
In particular, we generalize the result of~\cite{DVLS} that this lift preserves the Schouten bracket. This allows us to prove the missing parts of the conjectural description of the bi-Hamiltonian structure in~\cite{BRS-bracket} in the cases the lift is applied to the examples when this structure is already established. 

This also leads to an interesting question. According to the theory of Dubrovin-Novikov brackets~\cite{Dubrovin-Novikov}, the ingredients of the brackets are various geometric objects in contravariant differential geometry (see e.~g.~\cite{CC} and references therein)  that go far beyond the cases of tensor fields and connections explicitly studied by Morimoto in~\cite{Morimoto-jetbundle,Morimoto-Weil}. So, one might ask whether the lift that we describe below in the setup of local polyvector fields would match the Morimoto lift on the level of all involved geometric objects. 

\subsection{Acknowledgements}  
G.~C. and D.~M. were supported by the EIPHI Graduate School, contract ``ANR-17-EURE-0002'', grant ``METTHHOD''. S.~S. was supported by the Dutch Research Council, project OCENW.M.21.233.

\section{Partial cohomological field theories and their lifts}

After recalling the definition of homogeneous partial cohomological field theory and of the associated potential, in this section we define the lift of a P-CohFT with respect to a Frobenius algebra and provide a formula for the lifted potential.

\subsection{P-CohFTs} 

Let $V=\langle e_1,\dots,e_N\rangle$ be a finite dimensional vector space over $\mathbb{C}$ with a special element $e_1$, called \emph{unit}, and $\eta\colon V\otimes V\to \mathbb{C}$ a non-degenerate bilinear scalar product, called \emph{metric}. We denote by $\overline{\mathcal{M}}_{g,n}$ the moduli space of stable algebraic curves of genus $g$ with $n$ marked points and by $H^\bullet (\overline{\mathcal{M}}_{g,n},\mathbb{C})$ its even degree cohomology with coefficients in $\mathbb{C}$. 
\begin{definition} \label{def:pcohft}
A \emph{partial cohomological field theory (P-CohFT)} \cite{LiuRuanZhang} on $V$ is a system of linear maps of vector spaces
\begin{align}
    \mathrm{Cl}_{g,n}\colon V^{\otimes n} \to H^\bullet (\overline{\mathcal{M}}_{g,n},\mathbb{C})
\end{align}
for $g,n \geq0$ and $2g-2+n>0$, satisfying the following conditions:
\begin{itemize}
    \item The maps $\mathrm{Cl}_{g,n}$ are $\mathfrak{S}_n$-equivariant with respect to the action on $V^{\otimes n}$ given by permutation of the factors and the action on $H^\bullet (\overline{\mathcal{M}}_{g,n},\mathbb{C})$ induced by the automorphisms of the moduli space associated with  permutation of the labels of the marked points. 
    \item Let $\pi\colon \oM_{g,n+1}\to \oM_{g,n}$ be the map that forgets the last marked point and stabilizes the curve. Then $\pi^*\mathrm{Cl}_{g,n}(\otimes_{i=1}^n e_{\alpha_i}) = \mathrm{Cl}_{g,n+1}(\otimes_{i=1}^n e_{\alpha_i} \otimes e_1)$; moreover, $\mathrm{Cl}_{0,3}(e_\alpha\otimes e_\beta\otimes e_1) = \eta_{\alpha\beta}$. 
    \item Let $\mathrm{gl}\colon \oM_{g_1,n_1+1}\times\oM_{g_2,n_2+1} \to \oM_{g_1+g_2,n_1+n_2}$ be the map that creates a nodal curve out of two curves by identifying into a node their last marked points and identifying the labels on the first and the second component with the ordered subsets $I_1,I_2\subseteq \{1,\dots,n_1+n_2\}$, $|I_1|=n_1$, $|I_2|=n_2$. Then  
        $\mathrm{gl}^*\mathrm{Cl}_{g_1+g_2,n_1+n_2}(\otimes_{i=1}^{n_1+n_2} e_{\alpha_i}) = %\\ \notag & 
        \eta^{\beta\gamma} \mathrm{Cl}_{g_1,n_1+1}(\otimes_{i\in I_1} e_{\alpha_i}\otimes e_\beta) \otimes \mathrm{Cl}_{g_2,n_2+1}(\otimes_{i\in I_2} e_{\alpha_i}\otimes e_\gamma)$.
\end{itemize}
A P-CohFT is called a \emph{cohomological field theory (CohFT)} \cite{KontMan-CohFT} if it satisfies the following extra condition:
\begin{itemize}
    \item Let $\sigma: \oM_{g-1,n+2} \rightarrow \oM_{g,n}$ be the gluing map which assigns a nodal curve to a curve by identifying into a node its two last marked points. Then, $\sigma^* \mathrm{Cl}_{g,n} (\otimes_{i=1}^n e_{\alpha_i}) = \mathrm{Cl}_{g-1,n+2}( \otimes_{i=1}^n e_{\alpha_i} \otimes e_\alpha \otimes e_\beta) \ \eta^{\alpha \beta} $.
\end{itemize}
\end{definition}
\begin{definition} 
The \emph{partition function} of a P-CohFT is defined as $Z = \exp(\epsilon^{-2} F)$, where $F=\sum_{g=0}^\infty \epsilon^{2g} F_g$ is called the \emph{potential} and 
\begin{align} \label{eq:Fg-def}
    F_g \coloneqq \sum_{n=0}^\infty \frac{1}{n!} \int_{\oM_{g,n}} \mathrm{Cl}_{g,n}(\otimes_{i=1}^n e_{\alpha_i}) \sum_{d_1,\dots,d_n=0}^\infty \prod_{i=1}^n \psi_i^{d_i} t^{\alpha_i,d_i}.
\end{align}
Here we assume that for $g=0$ the sum over $n$ starts at $n=3$, and for $g=1$ it starts at $n=1$.
\end{definition}
\begin{definition} \label{def:homogeneity}
	A P-CohFT $\{\mathrm{Cl}_{g,n}\}$ is called \emph{conformal}, or \emph{(quasi-)ho\-mo\-ge\-neous}, if there exist constants ${\sfQ^\alpha}_\beta$ and $\sfB^\alpha$ and $\sfD$, called \emph{homogeneity weights}, such that ${\sfQ^\alpha}_1 = {\delta^\alpha}_1$ (the Kronecker delta) and
	$\eta_{\alpha\mu}{\sfQ^\mu}_\beta + {{\sfQ^{\nu}}_\alpha} \eta_{\nu\beta} = (2-\sfD) \eta_{\alpha\beta}$ and 
	\begin{align} \label{eq: conformal PCohFT}
		& \left( \cxdeg - (g-1) \sfD - n \right) \mathrm{Cl}_{g,n} (\otimes^n_{i=1} e_{\beta_i}) + \sum_{i=1}^{n} { \sfQ^\mu}_{\beta_i} \mathrm{Cl}_{g,n} (\otimes_{j=1}^{i-1} e_{\beta_j} \otimes e_\mu \otimes \otimes_{j=i+1}^n e_{\beta_j}  ) %=0.
		\\ \notag  & \qquad {+ \pi_* \mathrm{Cl}_{g,n+1} (\otimes_{i=1}^n e_{\beta_i} \otimes \sfB^\gamma e_\gamma)} = 0.
	\end{align}
\end{definition}
To simplify some expressions we introduce the following notation: when an upper, resp. lower, index with a bar appears, e.g. $\bar\alpha$, it means that the index has to be lowered, resp. raised, by multiplying the expression by $\eta_{\alpha \bar{\alpha}}$, resp. $\eta^{\alpha \bar{\alpha}}$. In this notation the condition $\eta_{\alpha\mu}{\sfQ^\mu}_\beta + {{\sfQ^{\nu}}_\alpha} \eta_{\nu\beta} = (2-\sfD) \eta_{\alpha\beta}$ above is equivalent to ${\sfQ^\alpha}_\beta = (2-\sfD){\delta^\alpha}_\beta-{\sfQ^{\bar\beta}}_{\bar{\alpha}}$.

In terms of the potential $F$, the homogeneity condition is written
\begin{multline}
	\label{eqn:homogeinity}
	 \left( \sum_{d \geq 0} {(\sfQ  - d)^\mu}_\gamma t^{\gamma,d}\frac{\partial}{\partial t^{\mu,d}} 
	{+ \sfB^\gamma \frac{\partial}{\partial t^{\gamma,0}}
	- \sum_{d \geq 0} \sfB^\beta  C_{\bar\mu\beta\gamma} t^{\gamma,d+1} \frac{\partial}{\partial t^{\mu,d}} }
	+ (3-\sfD) \frac{1}{2} \epsilon \frac{\partial }{\partial \epsilon}  \right) F 
	\\  
	= (3 - \sfD) F
	{+ \frac{1}{2} \sfB^\gamma  C_{\alpha\beta\gamma} t^{\alpha,0} t^{\beta,0} + \epsilon^2 \sfB^\gamma D_\gamma}.
\end{multline}
Here  $C_{\alpha\beta\gamma}\coloneqq 
\int_{\oM_{0,3}} \mathrm{Cl}_{0,3} (e_\alpha\otimes e_\beta \otimes e_\gamma)$ and 
$D_\alpha \coloneqq 
\int_{\oM_{1,1}} \mathrm{Cl}_{1,1} (e_\alpha)$.

\subsection{Lift of P-CohFTs with a Frobenius algebra} \label{cohft-frob}

Let $A = \langle f_1,\dots,f_L \rangle$ be a Frobenius algebra with a unit $f_1$, namely a commutative associative algebra with unit $f_1$, equipped with a map $\int\colon A\to \mathbb{C}$, called \emph{trace}, such that $\omega_{ij} \coloneqq \int f_if_j$ is non-degenerate.

Let $\{\mathrm{Cl}_{g,n}\}$ be a P-CohFT on  $V=\langle e_1,\dots,e_N\rangle$ with the scalar product $\eta$. Define ${}^A\!V\coloneqq V\otimes A = \langle \{ e_{\alpha,i}\coloneqq e_\alpha\otimes f_i\}_{\alpha = 1,\dots,N;\, i = 1,\dots, L} \rangle$ and ${}^A\!\eta(e_{\alpha,i},e_{\beta,j})=  \omega_{ij}\eta_{\alpha\beta}$. Consider a system of maps 
\begin{align}
	{}^A\!\mathrm{Cl}_{g,n}\colon \big({}^A\!V\big)^{\otimes n} \to H^{\bullet}(\oM_{g,n},\mathbb{C})
\end{align}
defined as
\begin{align} \label{eq:b-Cohft-def}
	{}^A\!\mathrm{Cl}_{g,n} (e_{\alpha_1,l_1}\otimes \cdots \otimes e_{\alpha_n,l_n}) \coloneqq \Big(\int f_{l_1}\cdots f_{l_n} \Big) \mathrm{Cl}_{g,n} (e_{\alpha_1}\otimes \cdots \otimes e_{\alpha_n}).
\end{align}
\begin{proposition} \label{prop:Lift-CohFT} 
The maps $\{{}^A\!\mathrm{Cl}_{g,n} \}$ define a P-CohFT on  ${}^A\!V$ with scalar product ${}^A\!\eta$ and unit $e_{1,1}$. Moreover, if the P-CohFT $\{\mathrm{Cl}_{g,n} \}$ is homogeneous with homogeneity weights ${\sfQ^\alpha}_\beta$, $\sfB^\alpha$, and $\sfD$, then $\{	{}^A\!\mathrm{Cl}_{g,n} \}$ is also homogeneous with homogeneity weights  ${{}^A\!\sfQ^{\alpha,l}}_{\beta,j} = {\delta^l}_j {\sfQ^\alpha}_\beta$, ${}^A\!\sfB^{\alpha,l} = {\delta^l}_1 \sfB^{\alpha}$, and ${}^A\!\sfD=\sfD$.
\end{proposition}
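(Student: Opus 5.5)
The plan is to check the axioms of Definition~\ref{def:pcohft} one at a time for ${}^A\!\mathrm{Cl}$, and then the homogeneity condition~\eqref{eq: conformal PCohFT}. In each case the statement about ${}^A\!\mathrm{Cl}$ should split into the corresponding statement for $\mathrm{Cl}$ multiplied by an identity in the Frobenius algebra $A$.

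\textbf{$\mathfrak{S}_n$-equivariance.} This is immediate, since $A$ is commutative and so $\int f_{l_1}\cdots f_{l_n}$ is symmetric in the indices $l_1,\dots,l_n$.

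\textbf{Unit axioms.} For the forgetful pullback we have
\[
\pi^*\,{}^A\!\mathrm{Cl}_{g,n}\bigl(\otimes e_{\alpha_i,l_i}\bigr) = \Bigl(\int \prod f_{l_i}\Bigr)\,\mathrm{Cl}_{g,n+1}\bigl(\otimes e_{\alpha_i}\otimes e_1\bigr),
\]
and this equals ${}^A\!\mathrm{Cl}_{g,n+1}(\otimes e_{\alpha_i,l_i}\otimes e_{1,1})$ because $f_1$ is the unit of $A$. Likewise
\[
{}^A\!\mathrm{Cl}_{0,3}(e_{\alpha,i}\otimes e_{\beta,j}\otimes e_{1,1}) = \Bigl(\int f_if_j\Bigr)\eta_{\alpha\beta} = \omega_{ij}\eta_{\alpha\beta} = {}^A\!\eta(e_{\alpha,i},e_{\beta,j}).
\]

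\textbf{Gluing axiom.} This is the key step. The inverse metric is ${}^A\!\eta^{(\beta,p)(\gamma,q)} = \omega^{pq}\eta^{\beta\gamma}$, so the right-hand side of the gluing axiom becomes
\[
\eta^{\beta\gamma}\,\mathrm{Cl}\otimes\mathrm{Cl}\,\cdot\,\sum_{p,q}\Bigl(\int \prod_{i\in I_1} f_{l_i}\, f_p\Bigr)\,\omega^{pq}\,\Bigl(\int \prod_{i\in I_2} f_{l_i}\, f_q\Bigr).
\]
The needed identity is $\sum_{p,q}\bigl(\int a f_p\bigr)\,\omega^{pq}\,\bigl(\int b f_q\bigr) = \int ab$. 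To see it, note that $\sum_p \bigl(\int a f_p\bigr)\omega^{pq} f_q = a$; this is the expansion of $a$ in the basis $\{f_q\}$, which follows from non-degeneracy of $\omega$. Multiplying by $b$ and taking the trace gives the claim. Applying the identity with $a = \prod_{I_1} f_{l_i}$ and $b = \prod_{I_2} f_{l_i}$ reproduces ${}^A\!\mathrm{Cl}_{g_1+g_2,n_1+n_2}$ pulled back by $\mathrm{gl}$.

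\textbf{Homogeneity.}
\begin{itemize}
\item For the metric compatibility condition, ${}^A\!\eta\,{}^A\!\sfQ + {}^A\!\sfQ^{T}\,{}^A\!\eta$ is $\omega_{ij}$ times the corresponding expression for $\eta$, so it equals $(2-\sfD)\,{}^A\!\eta$.
\item The normalization ${}^A\!\sfQ^{\alpha,l}{}_{1,1} = \delta^l_1\delta^\alpha_1$ holds.
\item In~\eqref{eq: conformal PCohFT} for ${}^A\!\mathrm{Cl}$, the first two terms equal $\int\prod f_{l_i}$ times the corresponding terms for $\mathrm{Cl}$, since ${}^A\!\sfQ$ acts diagonally in the $A$-index.
\item The $\pi_*$ term involves ${}^A\!\mathrm{Cl}_{g,n+1}(\ldots\otimes \sfB^\gamma e_{\gamma,1})$, which again carries the factor $\int\prod f_{l_i}\cdot f_1 = \int\prod f_{l_i}$.
\end{itemize}
So the whole expression is $\int\prod f_{l_i}$ times the homogeneity relation for $\mathrm{Cl}$, which vanishes. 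The value ${}^A\!\sfD = \sfD$ enters unchanged.

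The only real content is the Frobenius identity in the gluing step, which I expect to be the main point to get right. The one index convention to watch there is that the inverse of ${}^A\!\eta$ is the tensor product of the inverses, $\omega^{-1}\otimes\eta^{-1}$.
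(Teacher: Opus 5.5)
Your proposal is correct and follows essentially the same route as the paper. Each axiom and the homogeneity relation for ${}^A\!\mathrm{Cl}$ reduce to the corresponding statement for $\mathrm{Cl}$ multiplied by a trace factor, and the gluing step rests on the identity $\sum_{p,q}(\int a f_p)\,\omega^{pq}\,(\int b f_q) = \int ab$. You are in fact slightly more complete than the paper: you also verify the normalization ${}^A\!\mathrm{Cl}_{0,3}(e_{\alpha,i}\otimes e_{\beta,j}\otimes e_{1,1}) = {}^A\!\eta_{\alpha,i,\beta,j}$, and you justify this Frobenius identity explicitly, whereas the paper uses it implicitly.
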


\begin{proof} First, we check the axioms of P-CohFT. The $\mathfrak{S}_n$-equivariance is immediate since $\int f_{l_1}\cdots f_{l_n}$ is symmetric in its arguments. The pull-back with respect to $\pi\colon \oM_{g,n+1}\to \oM_{g,n}$ clearly satisfies
\begin{align}
    \pi^*{}^A\!\mathrm{Cl}_{g,n}(\otimes_{i=1}^n e_{\alpha_i,l_i})  &= 
    \Big(\int f_{l_1}\cdots f_{l_n} \Big) \pi^*
    \mathrm{Cl}_{g,n}(\otimes_{i=1}^n e_{\alpha_i})  \\  &= 
    \Big(\int f_{l_1}\cdots f_{l_n} f_1 \Big) 
    \mathrm{Cl}_{g,n+1}(\otimes_{i=1}^n e_{\alpha_i}\otimes e_1) \notag \\
    &= {}^A\!\mathrm{Cl}_{g,n+1}(\otimes_{i=1}^n e_{\alpha_i,l_i}\otimes e_{1,1}). \notag
\end{align}
For the pull-back with respect to $\mathrm{gl}\colon \oM_{g_1,n_1+1}\times\oM_{g_2,n_2+1} \to \oM_{g_1+g_2,n_1+n_2}$ we have
\begin{align}
    \mathrm{gl}^*{}^A&\!\mathrm{Cl}_{g_1+g_2,n_1+n_2}(\otimes_{i=1}^{n_1+n_2} e_{\alpha_i,l_i})  = 
    \Big(\int f_{l_1}\cdots f_{l_{n_1+n_2}} \Big)
    \mathrm{gl}^*\mathrm{Cl}_{g_1+g_2,n_1+n_2}(\otimes_{i=1}^{n_1+n_2} e_{\alpha_i}) 
    \\ \notag & 
    = \Big(\int \prod_{i\in I_1} f_{l_i} \cdot f_{l'} \Big)
    \Big(\int \prod_{i\in I_2} f_{l_i} \cdot f_{l''} \Big)
    \omega^{l'l''} \eta^{\alpha'\alpha''} \mathrm{Cl}_{g_1,n_1+1}(\otimes_{i\in I_1} e_{\alpha_i}\otimes e_{\alpha'}) \mathrm{Cl}_{g_2,n_2+1}(\otimes_{i\in I_2} e_{\alpha_i}\otimes e_{\alpha''})
    \\ \notag & 
    = ({}^A\!\eta)^{\alpha',l',\alpha'',l''}
    {}^A\!\mathrm{Cl}_{g_1,n_1+1}(\otimes_{i\in I_1} e_{\alpha_i,l_i}\otimes e_{\alpha',l'}) {}^A\!\mathrm{Cl}_{g_2,n_2+1}(\otimes_{i\in I_2} e_{\alpha_i,l_i}\otimes e_{\alpha'',l''}).
\end{align}
The first assertion is proved. 

To prove the second assertion let us first check that the homogeneity weights satisfy the conditions of Definition~\ref{def:homogeneity}. Indeed, we have ${{}^A\!\sfQ^{\alpha,l}}_{1,1} = {\delta^l}_1 {\sfQ^\alpha}_1 = {\delta^l}_1{\delta^\alpha}_1 = {\delta^{\alpha,l}}_{1,1}$ and 
\begin{align}
 {}^A\!\eta_{\alpha,l,\mu,i} {{}^A\!\sfQ^{\mu,i}}_{\beta,j} 
+ 
{{}^A\!\sfQ^{\nu,k}}_{\alpha,l} {}^A\!\eta_{\nu,k,\beta,j}
&= \omega_{li}{\delta^i}_j \eta_{\alpha\mu } {\sfQ^{\mu}}_{\beta} 
+ \omega_{kj} {\delta^k}_l {\sfQ^{\nu}}_{\alpha} \eta_{\nu\beta} \\ \notag
&= 
\omega_{lj} (\eta_{\alpha\mu}{\sfQ^\mu}_\beta + {{\sfQ^{\nu}}_\alpha} \eta_{\nu\beta})
\\ \notag 
& = \omega_{lj} (2-\sfD) \eta_{\alpha\beta} = (2-{}^A\!\sfD) {}^A\!\eta_{\alpha,l,\beta,j}.
\end{align}
Then the quasi-homogeneity condition reads: 
\begin{align}
		& \left( \cxdeg - (g-1) {}^A\!\sfD - n \right) {}^A\!\mathrm{Cl}_{g,n} (\otimes^n_{i=1} e_{\beta_i,l_i}) + \sum_{i=1}^{n} { {}^A\!\sfQ^{\mu,k}}_{\beta_i,l_i} {}^A\!\mathrm{Cl}_{g,n} (\otimes_{j=1}^{i-1} e_{\beta_j,l_j} \otimes e_{\mu,k} \otimes \otimes_{j=i+1}^n e_{\beta_j,l_j}  ) %=0.
		\\ \notag  & \qquad + \pi_* {}^A\!\mathrm{Cl}_{g,n+1} (\otimes_{i=1}^n e_{\beta_i,l_i} \otimes {}^A\!\sfB^{\gamma,k} e_{\gamma,k})
        \\ \notag & 
        = \left( \cxdeg - (g-1) \sfD - n \right) \mathrm{Cl}_{g,n} (\otimes^n_{i=1} e_{\beta_i}) \, \Big(\int f_{l_1}\cdots f_{l_n}\Big) 
        \\ \notag & \qquad + \sum_{i=1}^{n} { \sfQ^{\mu}}_{\beta_i} \mathrm{Cl}_{g,n} (\otimes_{j=1}^{i-1} e_{\beta_j} \otimes e_{\mu} \otimes \otimes_{j=i+1}^n e_{\beta_j}  ) \,{\delta^k}_{l_i} \Big(\int f_{l_1}\cdots f_{l_{i-1}} f_k f_{l_{i+1}}\cdots f_{l_{n}}\Big) 
		\\ \notag  & \qquad  + \pi_* \mathrm{Cl}_{g,n+1} (\otimes_{i=1}^n e_{\beta_i} \otimes \sfB^{\gamma} e_{\gamma}) \, {\delta^k}_1 \Big(\int f_{l_1}\cdots f_{l_n} f_k\Big).
\end{align}
The right-hand side of this equation coincides with the left-hand side of~\eqref{eq: conformal PCohFT} multiplied by $\int f_{l_1}\cdots f_{l_n}$, therefore it vanishes. 
\end{proof}

\begin{remark}
This construction might be considered as a certain specialization of the construction of~\cite{KontMan-Product}, especially in the case when the Frobenius algebra $A$ is the purely even cohomology algebra of some even-dimensional variety equipped with the Poincar\'e pairing (we consider only purely even Frobenius algebras as opposed to the $\mathbb{Z}_2$ graded ones, since it allows us to ignore the emerging Koszul signs). 
\end{remark}

\begin{remark}
Note that the lift of a CohFT $\{\mathrm{Cl}_{g,n}\}$ is not necessarily a CohFT. Indeed, the extra condition mentioned in the definition~\eqref{def:pcohft} is satisfied if and only if the following two cohomology classes are equal
\begin{align}
    \sigma^* {}^A\!\mathrm{Cl}_{g,n} (\otimes_{i=1}^n e_{\alpha,l_i})&= \big(\int f_{l_1} \dots f_{l_n}\big) \sigma^* \mathrm{Cl}_{g,n} (\otimes_{i=1}^n e_{\alpha_i}),\\
    {}^A\!\mathrm{Cl}_{g,n} (\otimes_{i=1}^n e_{\alpha_i,l_i} \otimes e_{\alpha, k} \otimes e_{\beta, l}) {}^A\!\eta^{\alpha,k,\beta,l}&= \big(\int f_{l_1}\dots f_{l_n}f_kf_l ~\omega^{kl} \big) \mathrm{Cl}_{g,n} (\otimes_{i=1}^n e_{\alpha_i} \otimes e_{\alpha} \otimes e_\beta) \eta^{\alpha \beta}.
\end{align}
This is the case if and only if the Euler (or handle) element $e:=f_kf_l \omega^{kl}$ is equal to the unit of $A$. In particular this implies that $e$ is invertible and therefore that $A$ is semisimple~\cite{Koch}. % ex.19 p.130
Notice that it also follows that $\int f_1 = \text{dim}(A)$, which does not hold for example in the cases of the cohomology ring of a variety and of the Milnor ring of a singularity, as in those cases $\int f_1$ is zero. 
\end{remark}

\subsection{Lift of the potential} 

Let us denote by $F\{\mathrm{Cl}\}$ the potential of the P-CohFT $\{\mathrm{Cl}_{g,n}\}$ and by 
$F\{{}^A\!\mathrm{Cl}\}$ that of the lifted P-CohFT 
$\{{}^A\!\mathrm{Cl}_{g,n}\}$.
Recall that $F\{\mathrm{Cl}\}$ is a formal power series in the variables $\{t^{\alpha,d}\}$ and $\epsilon$, and $F\{{}^A\!\mathrm{Cl}\}$ is a formal power series in the variables $\{t^{\alpha,l,d}\}$ and $\epsilon$.

For any $l=2,\dots,L$ let $T^l\coloneqq \sum_{\alpha,d} t^{\alpha,l,d} \partial_{t^{\alpha,1,d}}$. 
\begin{corollary} The potentials of a P-CohFT and its lift are related by
\begin{align} \label{eq:lift-of-potential}
    F\{{}^A\!\mathrm{Cl}\} (\{t^{\alpha,l,d}\},\epsilon) = \sum_{m=0}^\infty \frac{1}{m!} \sum_{l_1,\dots,l_m=2}^L \Big(\int f_{l_1}\cdots f_{l_m} \Big) \prod_{i=1}^m T^{l_i} \Big( F\{\mathrm{Cl}\} (\{t^{\alpha,d}\},\epsilon) \big|_{t^{\alpha,d}\to t^{\alpha,1,d}} \Big).
\end{align}
\end{corollary}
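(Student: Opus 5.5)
The plan is to compare the two sides coefficient by coefficient as formal power series, working separately for each fixed genus $g$ and each fixed number $n$ of insertions. By the definition \eqref{eq:Fg-def} applied to the lifted P-CohFT of Proposition~\ref{prop:Lift-CohFT}, the genus $g$ part of $F\{{}^A\!\mathrm{Cl}\}$ equals
\begin{align*}
\sum_{n}\frac{1}{n!}\sum_{\alpha_i,l_i,d_i}\Big(\int f_{l_1}\cdots f_{l_n}\Big)\int_{\oM_{g,n}}\mathrm{Cl}_{g,n}(\otimes_{i=1}^n e_{\alpha_i})\prod_{i=1}^n\psi_i^{d_i}t^{\alpha_i,l_i,d_i},
\end{align*}
where each $l_i$ ranges over $1,\dots,L$.

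The first step is to group the terms by how many indices $l_i$ are different from $1$. Since the integrand is symmetric under simultaneous permutation of the triples $(\alpha_i,l_i,d_i)$, a term in which exactly $m$ of the indices satisfy $l_i\ge 2$ can be reordered so that these are the last $m$ positions. There are $\binom{n}{m}$ choices of which positions they occupy, and $\frac{1}{n!}\binom{n}{m}=\frac{1}{m!(n-m)!}$. Because $f_1$ is the unit, the trace factor reduces to $\int f_{l_{n-m+1}}\cdots f_{l_n}$. Here $m$ can be any value from $0$ to $n$.

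The second step is to identify each group with the corresponding term on the right-hand side of \eqref{eq:lift-of-potential}. Applying $T^{l_1}\cdots T^{l_m}$ to the substituted monomial
\begin{align*}
\frac{1}{n!}\int_{\oM_{g,n}}\mathrm{Cl}_{g,n}\prod_i\psi_i^{d_i}t^{\alpha_i,1,d_i}
\end{align*}
does the following. Each operator $T^{l}$ replaces one factor $t^{\alpha,1,d}$ by $t^{\alpha,l,d}$, summing over which factor is chosen. So the $m$-fold application produces an ordered choice of $m$ distinct factors among the $n$. Summing then over $l_1,\dots,l_m\ge 2$ and over the insertion labels, and using the symmetry of $\mathrm{Cl}_{g,n}$ and of the $\psi$-classes, this gives $\frac{n!}{(n-m)!}$ times the same sum of monomials with exactly $m$ positions carrying $l\ge 2$.

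Combining with the prefactors $\frac{1}{m!}\cdot\frac{1}{n!}$ yields exactly $\frac{1}{m!(n-m)!}$, which matches the first step. The factor $\int f_{l_1}\cdots f_{l_m}$ matches the trace factor from the first step. Monomials containing no $t^{\cdot,1,\cdot}$ variables in the remaining $n-m$ slots are handled automatically, since $T^l$ can only act on factors carrying $l=1$, and these factors are the ones that remain with index $1$.

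The main obstacle is this combinatorial bookkeeping: checking that repeated applications of the $T^l$ never double count, and that distinct ordered choices account exactly for the factor $m!$ against the symmetric sum over $(l_1,\dots,l_m)$. To handle it cleanly, I would first write $F$ in the symmetric form $\frac{1}{n!}\langle \tau^{\otimes n}\rangle$ with $\tau=\sum t^{\alpha,d}e_\alpha\psi^d$. Then $T^l$ acts as a derivation replacing one copy of $\tau_1$ by $\tau_l$, and the identity becomes the binomial expansion of $\tau=\tau_1+\sum_{l\ge2}\tau_l$.

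The stability ranges also need a check: $n\ge 3$ for $g=0$ and $n\ge1$ for $g=1$. The operators $T^l$ preserve $n$, so these ranges agree on both sides and no further argument is needed there.
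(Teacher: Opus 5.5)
Your proof is correct, but it takes a more hands-on route than the paper. You verify the identity monomial by monomial. You group the terms of $F\{{}^A\!\mathrm{Cl}\}$ by the set of positions carrying $l_i\ge 2$, which gives a factor $\frac{1}{n!}\binom{n}{m}$. You then note that $T^{l_1}\cdots T^{l_m}$ produces $\frac{n!}{(n-m)!}$ ordered choices of positions, and reduce each of them to a fixed set using the symmetry of $\mathrm{Cl}_{g,n}$, of the $\psi$-classes and of the trace. Both sides then carry the prefactor $\frac{1}{m!(n-m)!}$.

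The paper avoids this counting entirely. By \eqref{eq:b-Cohft-def} and commutativity of $A$, $F\{{}^A\!\mathrm{Cl}\}$ is just the trace $\int$ of $F\{\mathrm{Cl}\}$ evaluated at the $A$-valued variables $t^{\alpha,d}=\sum_{l}t^{\alpha,l,d}f_l$. This substitution is the formal Taylor shift $\exp\big(\sum_{l\ge 2}t^{\alpha,l,d}f_l\,\partial_{t^{\alpha,1,d}}\big)$ applied to $F$ at $t^{\alpha,d}=t^{\alpha,1,d}f_1$. Expanding the exponential and using that $f_1$ is the unit gives \eqref{eq:lift-of-potential} directly. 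There the $\frac{1}{m!}$ comes from the exponential rather than from an ordered-versus-unordered count.

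Your closing remark about binomially expanding $\tau=\tau_1+\sum_{l\ge 2}\tau_l$ is essentially this argument in multilinear form. The paper's formulation has the advantage of exhibiting the lifted potential as the trace of an $A$-valued lift of $F$, which is exactly what the subsequent remark on the Morimoto lift relies on. Your version makes explicit two things the paper leaves implicit: that only $m\le n$ contributes for each monomial, so the formal sums make sense, and how the $m=0$ term works (the empty trace is $\int f_1$, including for $n=0$).
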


\begin{proof}
By definition of the potential of the lifted P-CohFT, Eqs.~\eqref{eq:b-Cohft-def} and~\eqref{eq:Fg-def}, we can write
\begin{equation}
    F\{{}^A\!\mathrm{Cl}\} (\{t^{\alpha,l,d}\},\epsilon) = \int \sum_{g\geq0} \epsilon^{2g} \sum_{n=1}^\infty \frac{1}{n!} \int_{\oM_{g,n}} \mathrm{Cl}_{g,n}(\otimes_{i=1}^n e_{\alpha_i}) \sum_{d_1,\dots,d_n=0}^\infty \sum_{l_1, \dots, l_n=1}^L \prod_{i=1}^n \psi_i^{d_i} t^{\alpha_i,l_i,d_i} f_{l_i},
\end{equation}
which is equal to the trace of original potential $F\{\mathrm{Cl}\} (\{t^{\alpha,d}\},\epsilon)$ in which $t^{\alpha, d}$ is replaced with $\sum_{l=1}^L t^{\alpha, l, d} f_l$. This substitution can be written as the action of $\exp \left( \sum_{l=2}^L t^{\alpha, l, d} f_l \frac{\partial}{\partial t^{\alpha,1,d}} \right)$ on $F\{\mathrm{Cl}\} (\{t^{\alpha,d}\},\epsilon) \big|_{t^{\alpha,d}\to t^{\alpha,1,d}f_1}$. Expanding the exponential and remembering that $f_1$ is the unit of the Frobenius algebra $A$, one obtains~\eqref{eq:lift-of-potential}.
\end{proof}

\begin{remark}
Consider the case of a Frobenius algebra $A$ which is also a local algebra. Then the lift of Morimoto extends to our formal setup, where the coefficients are taken in $\mathbb{C}$ rather than in $\mathbb{R}$ and we work with formal power series in an infinite number of variables rather than with smooth functions on a manifold. 
The $A$-valued lift of a function, see~\eqref{eq:mori-lift}, in this case is given by 
\begin{align} \label{eq:F-A-lift}
{F\{\mathrm{Cl}\}}^A (\{t^{\alpha,l,d}\},\epsilon) = \bigg( \sum_{m=0}^\infty \frac{1}{m!} \Big( \sum_{\alpha,l,d} f_{l} \, t^{\alpha,l,d} \partial_{t^{\alpha,d}} \Big)^m F (\{t^{\alpha,d}\},\epsilon) \bigg) \bigg|_{t^{\alpha,d}\to 0}.
\end{align}
The lift~\eqref{eq:lift-of-potential} of the potential is recovered by applying the trace $\int$ to the $A$-valued lift~\eqref{eq:F-A-lift}, namely we have that
${F}\{\mathrm{Cl}\}^c  = \int {F}\{\mathrm{Cl}\}^A = F\{{}^A\!\mathrm{Cl}\}$.
\end{remark}

\begin{remark}\label{rmk: Milnor-ring}
An important source of Frobenius algebras which are also local algebras are the Milnor local rings of isolated singularities. The Milnor local ring of an isolated singularity $G\colon \mathbb{C}^D \to \mathbb{C}$ given by
\begin{align}
A = \frac{\mathbb{C}[z^1,\dots,z^D]}{(\partial_{z^1}G,\dots,\partial_{z^D}G)}, \qquad \int g(z^1,\dots,z^D) = \frac{1}{(2\pi\mathrm{i})^D}
\int_{C}
g(z^1,\dots,z^D) \frac{dz_1\cdots dz_D}{ \partial_{z^1}G\cdots \partial_{z^D}G},
\end{align}
where the integral is taken over a contour $\partial_{z^i}G = \varepsilon_i$, $i=1,\dots,D$, for some small positive real parameters $\varepsilon_1,\dots,\varepsilon_D$. The unit is the class of $1\in \mathbb{C}[z^1,\dots,z^D]$ in $A$, and $L=\dim A$ is the Milnor number of the singularity $G$~\cite{arnold,Koch}. 

In the case of the local algebra of the $A_2$ singularity, that is, $D=1$ and $G=z^3$, the Milnor number $L$ is equal to $2$, and the algebra is given by $\mathbb{C}[z]/(z^2)$.
As mentioned in the Introduction, in this case the Morimoto lift corresponds to the lift to the tangent bundle described in~\cite{YanoKoba-1}. The lift, in this sense, of the prepotential of the Frobenius manifold (obtained by setting $\epsilon=0$ and $t^{\alpha,d}=t^{\alpha,l,d}=0$ for $d>0$ in~\eqref{eq:lift-of-potential}) was studied in~\cite[Sec.~4.1]{DVLS}.
\end{remark}

\begin{remark}
In general, the lift of a semisimple P-CohFT with respect to a Frobenius algebra $A$ is not semisimple, if $A$ is not semisimple. 
To see this, recall that the structure constants at the origin of the (formal) Frobenius manifold associated to a P-CohFT are given by $C_{\alpha\beta}^\gamma = C_{\alpha\beta\delta}\eta^{\delta\gamma}$, where $C_{\alpha\beta\delta}$ is the three-point correlator introduced just after~\eqref{eqn:homogeinity}. Given a Frobenius algebra $A$ with structure constants $f_kf_l=\Delta_{kl}^r f_r$, it is easy to see that the structure constants at the origin associated with the lift of the CohFT with respect to $A$ are given by $ C_{\alpha,k; \beta,l}^{\gamma,j}= \Delta_{kl}^j\  C_{\alpha \beta}^\gamma$. Therefore, any nilpotent element of $A$ will give rise to nilpotent elements for the Frobenius manifold algebra at the origin. 
\end{remark}

\section{Local polyvector fields and their lifts}

In this section we define the lift of local polyvector fields with respect to a Frobenius algebra and show that it is compatible with the Schouten bracket. Moreover we investigate some relations of such lift with the Morimoto theory of the lift of geometric structures on a manifold to the Weil bundle associated with a local algebra. 

\subsection{Local polyvector fields} 

Let us consider the algebra of densities of local polyvector fields in the variables $u^\alpha$ for $\alpha=1, \dots, N$, namely
\begin{equation}
    \hat{\mathcal{A}} \coloneqq \mathbb{C}[[u^{\alpha,0}]]
    [\{u^{\alpha,s}\}_{s\geq 1},\{\theta^{s}_{\alpha}\}_{s\geq 0}][[\epsilon]].
\end{equation}
It is equipped with the grading $\deg_\theta$ defined on generators as $\deg_\theta(u^{\alpha,s}) = 0$, $\deg_\theta(\theta_\alpha^s) = 1$, $\alpha=1,\dots,N$, $s=0,1,2,\dots$, and $\deg_\theta(\epsilon) = 0$. Let $\hat{\mathcal{A}}^p$ be the homogeneous component of $\hat{\mathcal{A}}$ of  $\deg_\theta$ equal to $p$. When the index $s$ in $u^{\alpha,s}$ is equal to $0$, it is often omitted. 
    The algebra $\hat{\mathcal{A}}$ is also equipped with the standard derivation $\partial_x \coloneqq \sum_{s\geq 0} ( u^{\alpha,s+1}\partial_{u^{\alpha,s}} + \theta_\alpha^{s+1}\partial_{\theta_\alpha^{s}})$ and with the degree $\deg_{\partial_x}$ defined on generators as $\deg_{\partial_x}(u^{\alpha,s}) = s$, $\deg_{\partial_x}(\theta_\alpha^s) = s$, $\alpha=1,\dots,N$, $s=0,1,2,\dots$, and $\deg_{\partial_x}(\epsilon) = -1$. Let $\hat{\mathcal{A}}_d$ be the homogeneous component of $\hat{\mathcal{A}}$ of $\deg_{\partial_x}$ equal to $d$. Let $\hat{\mathcal{A}}^p_d = \hat{\mathcal{A}}^p \cap \hat{\mathcal{A}}_d$.
    The space of local polyvector fields is defined as the quotient
    $\hat{\mathcal{F}} = \hat{\mathcal{A}} / \partial_x \hat{\mathcal{A}}$, with the projection denoted by $\int dx \colon \hat{\mathcal{A}} \to \hat{\mathcal{F}}$. The gradings $\deg_\theta$ and $\deg_{\partial_x}$ descend to $\hat{\mathcal{F}}$, and we use the notations $\hat{\mathcal{F}}^p$ and $\hat{\mathcal{F}}_d$ to denote the respective homogeneous components. 
    Also, let $\hat{\mathcal{F}}^p_d = \hat{\mathcal{F}}^p \cap \hat{\mathcal{F}}_d$.
    
The space $\hat{\mathcal{F}}$ is equipped with the Schouten bracket given on $\overline P = \int dx \, P$ and $\overline Q = \int dx \, Q$, for $P\in \hat{\mathcal{A}}^p$ and  $Q\in \hat{\mathcal{A}}^q$, by
\begin{align}
    [\overline P,\overline Q] \coloneqq \int\!dx\,  \Big( \delta_{u^{\alpha}} P\, \delta_{\theta_{\alpha}} Q + (-1)^{p} \delta_{\theta_{\alpha}} P\, \delta_{u^{\alpha}} Q \Big),
\end{align}
    where $\delta_{u^\alpha},\delta_{\theta_{\alpha}}\colon \hat{\mathcal{A}}\to \hat{\mathcal{A}}$ are the variational derivatives defined as
\begin{align}
    \delta_{u^\alpha} & \coloneqq \sum_{s=0}^\infty (-\partial_x)^s \partial_{u^{\alpha,s}}, & \delta_{\theta_\alpha} & \coloneqq \sum_{s=0}^\infty (-\partial_x)^s \partial_{\theta_{\alpha}^s}.
\end{align}
They both vanish on $\partial_x \hat{\mathcal{A}}$, so they descend to  operators from $\hat{\mathcal{F}}$ to $\hat{\mathcal{A}}$, which we denote by the same symbols.

\subsection{Lift of local polyvector fields with a Frobenius algebra} \label{subsec: Lift with a Frobenius}

As in  \S\ref{cohft-frob}, let $A$ be a Frobenius algebra of dimension $L$. Let us denote by ${}^A\!\hat{\mathcal{A}}$ the algebra of densities of local polyvector fields in the variables $u^{\alpha,l}$ for $\alpha=1, \dots , N$ and $l=1, \dots, L$, namely
\begin{equation}
    {}^A\!\hat{\mathcal{A}} \coloneqq \mathbb{C} [[u^{\alpha,l,0}]]    [\{u^{\alpha,l,s}\}_{s\geq 1},\{\theta^{s}_{\alpha,l}\}_{s\geq 0}][[\epsilon]].
\end{equation}
The corresponding space of local polyvector fields is  ${}^A\!\hat{\mathcal{F}} = {}^A\!\hat{\mathcal{A}} / \partial_x {}^A\!\hat{\mathcal{A}}$ and is equipped with the Schouten bracket 
\begin{align}
    [\overline P,\overline Q] \coloneqq \int\!dx\, \delta_{u^{\alpha,i}} P\, \delta_{\theta_{\alpha,i}} Q + (-1)^{\deg_\theta P} \delta_{\theta_{\alpha,i}} P\, \delta_{u^{\alpha,i}} Q.
\end{align}
It is convenient to introduce an alternative presentation of ${}^A\!\hat{\mathcal{A}}$ and ${}^A\!\hat{\mathcal{F}}$ in terms of generators $\theta_{\alpha}^{i,s} \coloneqq \omega^{il}\theta_{\alpha,l}^{s}$. In these variables the Schouten bracket is given by
\begin{align}
    [\overline P,\overline Q] \coloneqq \int\!dx\, \omega^{ij} \Big( \delta_{u^{\alpha,i}} P\, \delta_{\theta_{\alpha}^j} Q + (-1)^{\deg_\theta P} \delta_{\theta_{\alpha}^i} P\, \delta_{u^{\alpha,j}} Q \Big).
\end{align}

Let us define an algebra homomorphism $\mathsf{I}\colon \hat{\mathcal{A}} \to {}^A\!\hat{\mathcal{A}}$ acting on the generators as
$\mathsf{I}(u^{\alpha,s}) = u^{\alpha,1,s}$ and $\mathsf{I}(\theta_{\alpha}^{s}) = \theta_{\alpha}^{1,s}$. 
This map commutes with $\partial_x$ and thus induces a map $\mathsf{I}\colon \hat{\mathcal{F}} \to {}^A\!\hat{\mathcal{F}}$.

For $l=2,\dots,L$, let us define the derivation $J^l$ of ${}^A\!\hat{\mathcal{A}}$ by
\begin{equation}
    J^l\coloneqq u^{\alpha,l,s}\partial_{u^{\alpha,1,s}} + \theta_{\alpha}^{l,s}\partial_{\theta_{\alpha}^{1,s}}
\end{equation}
which clearly commutes with $\partial_x$. We also define for $i=1,\dots,L$
\begin{align}
   \mathbb{J}_i \coloneqq \int f_i +\sum_{m=1}^\infty \frac{1}{m!} \sum_{2\leq l_1,\dots, l_m \leq L} \Big(\int f_if_{l_1}\cdots f_{l_m}\Big) \prod_{j=1}^m {J^{l_j}}.
\end{align}
\begin{lemma} \label{L-JXY}
    We have $ \mathbb{J}_i(XY)  = \int f_if_jf_k \omega^{jm} \omega^{kl} \mathbb{J}_m(X) \mathbb{J}_l(Y)$
    %$\mathbb{J}_1 (XY) = \omega^{ij} \mathbb{J}_i(X)\mathbb{J}_j(Y)$ 
    for any $X,Y\in {}^A\!\hat{\mathcal{A}}$.
\end{lemma}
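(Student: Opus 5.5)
The plan is to repackage the operators $\mathbb{J}_i$ as components of a single $A$-valued operator and then use multiplicativity of an exponential of a derivation. Set
\begin{equation*}
D \coloneqq \sum_{l=2}^L f_l\, J^l ,
\end{equation*}
which is an $A$-linear derivation of ${}^A\!\hat{\mathcal{A}}\otimes A$. Since the $J^l$ act on disjoint sets of target variables (only on the variables with upper index $1$) while inserting variables with index $l\ge 2$, they pairwise commute. Hence $D$ is well defined and $\exp(D)=\sum_m \frac{1}{m!}D^m$ makes sense formally. The sum is in fact finite on each monomial, at least in the sense of formal power series in the $u^{\alpha,l,s}$.

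The first step is to observe that for every $X$,
\begin{equation*}
\mathbb{J}_i(X)=\int f_i\,\exp(D)X .
\end{equation*}
This follows by expanding $D^m=\sum_{l_1,\dots,l_m\ge2} f_{l_1}\cdots f_{l_m}J^{l_1}\cdots J^{l_m}$ and comparing with the definition, where the $m=0$ term gives $\int f_i$.

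Next I would write $\Phi(X)\coloneqq \exp(D)X\in {}^A\!\hat{\mathcal{A}}\otimes A$. Because $D$ is a derivation that commutes with multiplication by $A$, the exponential is an algebra homomorphism: $\Phi(XY)=\Phi(X)\Phi(Y)$. The proof is the standard Leibniz/binomial argument $D^m(XY)=\sum_{k}\binom{m}{k}D^kX\,D^{m-k}Y$, valid because $D$ is even. $D$ preserves $\theta$-degree, so no Koszul signs appear; here I am using that $A$ is purely even, as the paper assumes.

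Then I would decompose $\Phi(X)$ in the basis $f_m$. Nondegeneracy of the trace gives, for any $a\in A$, the expansion $a=\sum_{m,j}\big(\int a f_m\big)\omega^{mj} f_j$. Hence
\begin{equation*}
\Phi(X)=\mathbb{J}_m(X)\,\omega^{mj}f_j ,
\end{equation*}
and similarly for $Y$. Substituting this into $\mathbb{J}_i(XY)=\int f_i\,\Phi(X)\Phi(Y)$ yields
\begin{equation*}
\int f_i f_j f_k\,\omega^{jm}\omega^{kl}\,\mathbb{J}_m(X)\mathbb{J}_l(Y),
\end{equation*}
using the symmetry of $\omega$. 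This is the claim.

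The only delicate point is the multiplicativity of $\exp(D)$, specifically checking that the $J^l$ commute and that the exponential series is well defined. On the completed algebra each $J^l$ annihilates everything not involving index-$1$ variables and raises polynomial degree in the nilpotent-direction variables. So I expect convergence in the formal topology to be fine, but this is where I would be careful. Everything else is bookkeeping with the Frobenius pairing.
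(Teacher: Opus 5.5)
Your proposal is correct and follows essentially the paper's argument. The paper likewise rewrites $\mathbb{J}_i=\int f_i\exp\big(\sum_{l=2}^L f_lJ^l\big)$ and invokes the identity $\omega^{jm}\omega^{kl}\int f_if_jf_k\int f_m a\int f_l b=\int f_i ab$, which is exactly your dual-basis expansion $a=\big(\int a f_m\big)\omega^{mj}f_j$ combined with the multiplicativity of $\exp(D)$ that the paper leaves implicit.
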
  
\begin{proof}
    Just rewrite $\mathbb{J}_i = \int f_i \exp\left( \sum_{l=2}^L f_l J^l \right)$ and recall the identity $\omega^{jm} \omega^{kl}\int f_if_jf_k \int f_m a \int f_l b = \int f_i ab$.
\end{proof}

\begin{definition}\label{def:lift-polyvector} The \emph{lift of a $p$-polyvector field} $\overline P$ is defined as ${}^A\!\overline P = \int\!dx \, {}^A\!P$, where ${}^A\!P = \mathbb{J}_1 \mathsf{I} P$.
\end{definition}

\begin{proposition} \label{prop:Schouten commutes with lift} The lift is well defined, that is, it doesn't depend on  the choice of the density $P$ of the polyvector field $\overline P$. Moreover, the lift is compatible with the Schouten bracket, 
namely, for any polyvector fields $\overline P,\overline Q$ we have
    \begin{equation}
        [{}^A\!\overline P, {}^A\!\overline Q] = {}^A\![\overline P,\overline Q].
    \end{equation}
In other words, the lift defines a morphism of the graded Lie algebras $\hat{\mathcal{F}}$ and ${}^A\!\hat{\mathcal{F}}$.
\end{proposition}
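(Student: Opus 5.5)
Well-definedness comes first. Both $\mathsf{I}$ and every $J^l$ commute with $\partial_x$, hence so does $\mathbb{J}_1$, which is a series in the $J^l$ with constant coefficients. So ${}^A(\partial_x R)=\partial_x({}^A R)$, and this lies in $\partial_x\,{}^A\!\hat{\mathcal{A}}$. The lift therefore descends to the quotient $\hat{\mathcal{F}}$.

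For the bracket I would compute the variational derivatives of ${}^A\!P$ in terms of those of $P$. The key claim is
\begin{align*}
\delta_{u^{\alpha,i}}\,{}^A\!P=\mathbb{J}_i\,\mathsf{I}\,\delta_{u^\alpha}P,\qquad
\delta_{\theta_\alpha^i}\,{}^A\!P=\omega_{ij}\ldots
\end{align*}
and I need to pin down the index placement. The clean way is to write $\mathbb{J}_1\mathsf{I}P$ as $\int P(\sum_l u^{\cdot,l}f_l,\sum_l \theta^{l}f_l)$, meaning $P$ evaluated on $A$-valued variables and then traced. This is exactly the content of $\mathbb{J}_1=\int\exp(\sum f_lJ^l)$ together with $f_1$ being the unit.

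With that description the partial derivatives are easy to read off. We have $\partial_{u^{\alpha,i,s}}{}^A\!P=\int f_i\,(\partial_{u^{\alpha,s}}P)^A$, where $(\cdot)^A$ denotes the $A$-valued substitution. This quantity equals $\mathbb{J}_i\mathsf{I}\,\partial_{u^{\alpha,s}}P$. Since $\mathbb{J}_i$ and $\mathsf{I}$ commute with $\partial_x$, summing over $s$ gives
\begin{align*}
\delta_{u^{\alpha,i}}{}^A\!P=\mathbb{J}_i\mathsf{I}\,\delta_{u^\alpha}P,\qquad
\delta_{\theta_\alpha^i}{}^A\!P=\mathbb{J}_i\mathsf{I}\,\delta_{\theta_\alpha}P.
\end{align*}
For the $\theta$-variables one must take care that the generators $\theta^{i}_\alpha$, with the raised index, are the ones that pair with $f_i$. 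I would use the $\theta_\alpha^{i,s}$ presentation precisely because $J^l$ is written in it. A direct check is that $J^l$ acting through the chain rule reproduces multiplication by $f_l$ inside the trace.

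Now substitute these into the Schouten bracket in the $\omega^{ij}$ form. The first term becomes $\int dx\,\omega^{ij}\,\mathbb{J}_i(X)\,\mathbb{J}_j(Y)$, with $X=\mathsf{I}\delta_{u^\alpha}P$ and $Y=\mathsf{I}\delta_{\theta_\alpha}Q$, and the second term is treated in the same way. Apply Lemma~\ref{L-JXY} with $i=1$. It gives $\mathbb{J}_1(XY)=\int f_jf_k\,\omega^{jm}\omega^{kl}\,\mathbb{J}_m(X)\mathbb{J}_l(Y)$. Since $\omega^{jm}\omega^{kl}\omega_{jk}=\omega^{ml}$, this equals $\omega^{ml}\mathbb{J}_m(X)\mathbb{J}_l(Y)$. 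Together with the fact that $\mathsf{I}$ is an algebra homomorphism, this shows that the integrand is $\mathbb{J}_1\mathsf{I}$ applied to the original Schouten integrand. The result is ${}^A[\overline P,\overline Q]$. Signs cause no trouble: $\mathbb{J}$ and $\mathsf{I}$ preserve $\deg_\theta$, and $J^l$ is an even derivation, so $(-1)^{p}$ passes through unchanged.

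The main obstacle is the identity for the variational derivatives. Its cleanest proof is the $A$-valued substitution picture, which requires checking carefully that $\mathbb{J}_i\mathsf{I}$ equals $\int f_i(\cdot)^A$. This holds because $\exp(\sum_l f_lJ^l)$ applied to $\mathsf{I}R$ shifts $u^{\alpha,1,s}\mapsto \sum_l u^{\alpha,l,s}f_l$, using that $\mathsf{I}R$ depends only on the $l=1$ variables. One must also check that the index conventions for $\theta$ and $\omega$ match. Once this is established, the rest is the bookkeeping carried out in the previous paragraph.
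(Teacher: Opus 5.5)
Your proposal is correct and follows essentially the same route as the paper. Well-definedness comes from $\mathbb{J}_1$ and $\mathsf{I}$ commuting with $\partial_x$; the bracket step rests on the identities $\delta_{u^{\alpha,i}}\mathbb{J}_1\mathsf{I}=\mathbb{J}_i\mathsf{I}\delta_{u^\alpha}$ and $\delta_{\theta_\alpha^i}\mathbb{J}_1\mathsf{I}=\mathbb{J}_i\mathsf{I}\delta_{\theta_\alpha}$, followed by Lemma~\ref{L-JXY} with $i=1$ to recombine $\omega^{ij}\mathbb{J}_i(X)\mathbb{J}_j(Y)=\mathbb{J}_1(XY)$. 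Your $A$-valued substitution picture, $\mathbb{J}_i\mathsf{I}R=\int f_i\,R^A$, supplies the details of the ``simple computation'' that the paper leaves implicit for those variational-derivative identities.
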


\begin{proof} The first assertion follows from the fact that both $\mathbb{J}_1$ and $\mathsf{I}$ commute with $\partial_x$. A simple computation shows that the following identities hold 
\begin{align} \label{eq:delta-I}
    \delta_{u^{\alpha,i}} \mathbb{J}_1 \mathsf{I} = \mathbb{J}_i \delta_{u^{\alpha,1}}\mathsf{I} =  \mathbb{J}_i \mathsf{I} \delta_{u^{\alpha}}\qquad \text{and} \qquad
     \delta_{\theta_{\alpha}^i} \mathbb{J}_1 \mathsf{I} = \mathbb{J}_i \delta_{\theta_{\alpha}^1} \mathsf{I} = \mathbb{J}_i \mathsf{I} \delta_{\theta_{\alpha}},
\end{align}
where the operators $\delta_{u^{\alpha,i}}$, $\delta_{\theta_{\alpha}^i}$ are defined on ${}^A\!\hat{\mathcal{A}}$, while the operators $\delta_{u^{\alpha}}$, $\delta_{\theta_{\alpha}}$ act on $\hat{\mathcal{A}}$.
It follows that
\begin{align} \label{eq:Computation-Schouten-B}
    [{}^A\!\overline P, {}^A\!\overline Q] & = \int\!dx\, \omega^{ij} \Big( \delta_{u^{\alpha,i}} {}^A\!P\, \delta_{\theta_{\alpha}^j} {}^A\!Q + (-1)^{\deg_\theta P} \delta_{\theta_{\alpha}^i} {}^A\! P\, \delta_{u^{\alpha,j}} {}^A\! Q \Big) 
    \\ \notag 
    & = \int\!dx\, \omega^{ij} \Big( \mathbb{J}_i \mathsf{I}  \delta_{u^{\alpha}} P\, \mathbb{J}_j  \mathsf{I} \delta_{\theta_{\alpha}} Q + (-1)^{\deg_\theta P} \mathbb{J}_i \mathsf{I} \delta_{\theta_{\alpha}}  P\, \mathbb{J}_j \mathsf{I} \delta_{u^{\alpha}}  Q \Big)
    \\ \notag & 
    = \int\!dx\, \mathbb{J}_1  \mathsf{I} \Big( \delta_{u^{\alpha}} P\, \delta_{\theta_{\alpha}} Q + (-1)^{\deg_\theta P} \delta_{\theta_{\alpha}} P\, \delta_{u^{\alpha}} Q \Big)
    \\ \notag & 
    = \mathbb{J}_1  \mathsf{I} [\overline P,\overline Q] = {}^A\![\overline P,\overline Q].
\end{align}
\end{proof}

\subsection{Lift of geometric ingredients of Poisson structures} 

According to~\cite{Dubrovin-Novikov}, the density of a hydrodynamic bracket can be written as 
\begin{align}
\frac 12 g^{\alpha\beta} \theta_\alpha \theta_\beta^1 + \Gamma^{\alpha\beta}_\gamma u^{\gamma,1} \theta_\alpha\theta_\beta,     
\end{align}
where, under an extra assumption that $g^{\alpha\beta} = g^{\alpha\beta}(u^{1},\dots,u^{N})$ is invertible, $g_{\alpha\beta}$ is a metric (over the field of complex numbers we omit positive definiteness as a condition) and $\Gamma^{\alpha\beta}_\gamma =\Gamma^{\alpha\beta}_\gamma (u^{1},\dots,u^{N})$ are the Christoffel symbols of the associated contravariant connection, that is, $\Gamma_{\alpha\gamma}^\beta\coloneqq -g_{\alpha\mu}\Gamma^{\mu\beta}_\gamma$ are the standard Christoffel symbols. 

The lift of a hydrodynamic bracket, in the sense of Def.~\ref{def:lift-polyvector}, is again a hydrodynamic bracket, we obtain natural formulas for the lifts of the (inverted) metric and (contravariant) connection. As we mentioned above, it is natural to compare the formulas that we derive from Def.~\ref{def:lift-polyvector} with the complete lift of Morimoto. Since the theory of complete lift is fully developed and has a clear geometric meaning only in the case of the Frobenius algebra of $A_{r+1}$ singularity $\mathbb{R}[z]/(z^{r+1})$~\cite{Morimoto-jetbundle}, we restrict ourselves to this special case.

The latter theory is known under the name of $(\lambda)$-lift. In Sec.~\ref{sec:lambda-lift} we briefly recall the relevant theory following~\cite{Morimoto-jetbundle}, and in Sec.~\ref{sec:lift-comparison} we identify the two constructions for the structures of the hydrodynamic bracket.

\subsubsection{The \texorpdfstring{$(\lambda)$}{(lambda)}-lift to tangent bundles of higher order} \label{sec:lambda-lift}

Let $M$ be a manifold and let $T^rM$ be the tangent bundle of order $r$, namely, the collection of equivalence classes of paths on $M$, where we identify two paths passing from the same base point if their derivatives agree up to order $r$. Note that if $r=1$, this is simply the tangent bundle $TM$. Let $\{u^\alpha\}_{\alpha=1,\dots,n}$ be a set of local coordinates and let $\{u^{\alpha,l} \}_{\alpha=1,\dots,n, ; l=0\dots,r} $ be the induced system of coordinates on $T^rM$, that is, we assign coordinates $u^{\alpha,0}=u^{\alpha}(0), \ u^{\alpha,l}= \frac{1}{l!}\frac{d^l u^{\alpha}}{\partial\tau^l}|_{\tau=0}$, to a path $\gamma=(u^{\alpha}(\tau))$ on $M$.

For $\lambda=0,\dots,r$, the $(\lambda)$-lift of a function $f: M \rightarrow \mathbb{R}$, a vector field $X=X^\alpha \frac{\partial}{\partial u^\alpha}$ and a one form $\omega=\omega_\alpha du^\alpha$ on $M$ are defined by
\begin{align}
    f^{(\lambda)} &= \sum_{m=1}^\lambda \frac{1}{m!} \sum_{\substack{1\leq l_i \leq r \\ l_1+\dots+ l_m=\lambda}} u^{\alpha_1,l_1} \cdots u^{\alpha_m, l_m} \frac{\partial^m f}{\partial u^{\alpha_1}\cdots \partial u^{\alpha_m}}.\label{eq: f-lambda} \\
    X^{(\lambda)} &= \sum_{l = r-\lambda}^r (X^\alpha)^{(l+\lambda -r)} \frac{\partial}{\partial u^{\alpha, l}}, \qquad \omega^{(\lambda)} = \sum_{l=0}^\lambda \omega_\alpha^{(l)} du^{\alpha, \lambda-l}.
    \end{align}
and are uniquely determined by $X^{(\lambda)}(f^{(\mu)}) = (Xf)^{(\lambda+\mu-r)}, \ \omega^{(\lambda)}(X^{(\mu))}) = (\omega(X))^{(\lambda+\mu-r)}$. The $(\lambda)$-lift of any tensor field can be defined recursively, upon requiring compatibility with contractions and total derivatives. In particular, the $(\lambda)$-lift of a contravariant $2$-tensor $g=g^{\alpha\beta} \frac{\partial}{\partial u^{\alpha}} \otimes \frac{\partial}{\partial u^\beta}$ reads
\begin{equation} \label{eq:moto-lift-g}
    g^{(\lambda)} = \sum_{l_1=0}^\lambda \sum_{l_2=0}^{l_1} (g^{\alpha \beta})^{(l_2)} \frac{\partial}{\partial u^{\alpha,r+l_2-l_1}} \otimes \frac{\partial}{\partial u^{\beta, r+l_1-\lambda}}.
\end{equation}
while the $(\lambda)$-lift of a covariant $2$-tensor on $M$, $h=h_{\alpha \beta} du^\alpha \otimes du^\beta$, is given by 
\begin{equation}
    h^{(\lambda)} = \sum_{\mu=0}^\lambda \sum_{\nu =0}^\mu h_{\alpha \beta}^{(\nu)} \ du^{\alpha, \mu-\nu} \otimes du^{\beta, \lambda- \mu}
\end{equation}

\begin{definition}
The complete lift of a geometric object is the $(\lambda)$-lift for $\lambda=r$.    
\end{definition}
%, we shall refer to the $(r)$-lift of a geometric object as the complete lift.

Finally, the complete lift of an affine connection $\nabla\frac{\partial}{\partial u^{\gamma}} = \Gamma^{\alpha}_{\beta \gamma}\frac{\partial}{\partial u^{\alpha}} \otimes du^\beta$ 
is the connection $\nabla^{(r)}$ on $T^rM$ satisfying $\nabla^{(r)}_{X^{(r)}}Y^{(r)} =(\nabla_X Y)^{(r)}$
\begin{equation}\label{eq:moto-lift-nabla}
    \nabla^{(r)} \frac{\partial}{\partial u^{\gamma, l}} = \sum_{l_1=l}^r \sum_{l_2=0}^{l_1-l} \big( \Gamma^{\alpha}_{\beta \gamma})^{(l_1-l-l_2)}\frac{\partial}{\partial u^{\alpha, l_1}} \otimes du^{\beta,l_2}.
\end{equation}
It was also proven in \cite{Morimoto-jetbundle} that the torsion and curvature of $\nabla^{(r)}$ coincide with the $(r)$-lift of the torsion and curvature of $\nabla$. We will need below a formula for the complete lift of the contravariant connection $\tilde{\nabla}^{\alpha\beta}_\gamma = -g^{\alpha\nu} \Gamma_{\nu\gamma}^\beta$, which can be derived from Eqs.~\eqref{eq:moto-lift-g},~\eqref{eq:moto-lift-nabla}:
\begin{equation}\label{eq:moto-lift-contra}
    \tilde\nabla^{(r)} \frac{\partial}{\partial u^{\gamma,l}} = \sum_{l_1=l}^r \sum_{l_2=0}^{l_1-l} \big( \Gamma^{\alpha\beta}_{\gamma})^{(l_1-l-l_2)} \frac{\partial }{\partial u^{\alpha, r-l_2}} \otimes \frac{\partial}{\partial u^{\beta, l_1}}  .
\end{equation}

\subsubsection{Comparison of two constructions}
\label{sec:lift-comparison}

First, we need to match the notation of Morimoto used in Sec.~\ref{sec:lambda-lift} and the notation used in Sec.~\ref{subsec: Lift with a Frobenius}. For a given $r$ the related Frobenius algebra is $A=\mathbb{C}[z]/(z^{r+1})$, of dimension $L=r+1$. For convenience, we change index of the basis of the corresponding Frobenius algebra as $\langle f_0,\dots,f_r\rangle$, with $f_i=[z^i]$, thus $f_0$ being the unit and $\omega_{ij} = \delta_{i+j,r}$. 

After this change of notation, the lift of Def.~\ref{def:lift-polyvector} can be described as follows. We identify $u^{\alpha,0,s}$, resp. $\theta_{\alpha,0}^s$, with $u^{\alpha,s}$, resp. $\theta_{\alpha, r}^s$. Let $J^l\coloneqq \sum_{s} u^{\alpha,l,s}\partial_{u^{\alpha,0,s}} + \theta_{\alpha,r-l}^{s} 
\partial_{\theta_{\alpha,r}^{s}}$. With this preparation and change of notation, we see that according to Def.~\ref{def:lift-polyvector} the $A$-lift of a $p$-polyvector field $\bar{P}$ for $A=\mathbb{C}[z]/(z^{r+1})$ is given by
	\begin{align}
		{}^A\!\bar{P} \coloneqq  \int\!dx \sum_{m=1}^r \frac{1}{m!} \sum_{\substack{1\leq l_i \leq r \\ l_1+\cdots + l_m = r }} \prod_{i=1}^m {J^{l_i}} \sum_{s_1,\dots,s_p} P^{\alpha_1,\dots,\alpha_p}_{s_1,\dots,s_p} \theta_{\alpha_1}^{s_1}\cdots \theta_{\alpha_p}^{s_p}. 
	\end{align}
By immediate substitution, we obtain the following
\begin{corollary} Let $\bar{P} = \int\!dx g^{\alpha\beta} \theta_\alpha \theta_\beta^1 + \Gamma^{\alpha\beta}_{\gamma}u^{\gamma,1}\theta_\alpha\theta_\beta$ and $A=\mathbb{C}[z]/(z^{r+1})$. Then ${}^A\!\bar{P}$ is equal to
\begin{align} \label{eq:r-lift-bracket}
    {}^A\!\bar{P} = \int\!dx \sum_{\substack{0\leq l_1,l_2,l_3 \leq r\\ l_1+l_2+l_3 = r}} (g^{\alpha\beta})^{(l_1)} \theta_{\alpha,r-l_2} \theta_{\beta,r-l_3}^1
    + \sum_{\substack{0\leq l_1,l_2,l_3,l_4 \leq r\\ l_1+l_2+l_3+l_4 = r}} (\Gamma_{\nu\gamma}^\beta)^{(l_1)} u^{\gamma,l_2,1} \theta_{\alpha,r-l_3}\theta_{\beta,r-l_4}. 
\end{align}  
\end{corollary}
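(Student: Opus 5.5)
The plan is to substitute the density directly into the formula for ${}^A\!\bar P$ stated just above, with $p=2$, and expand the operator $\sum_m \frac{1}{m!}\sum_{l_1+\dots+l_m=r}\prod_i J^{l_i}$ using the Leibniz rule. We write this operator as the coefficient of $z^r$ in $\exp\big(\sum_{l\ge1} z^l J^l\big)$, which is exactly $\mathbb{J}_0$ in the new indexing. This holds because $\int f_0 f_{l_1}\cdots f_{l_m}=\delta_{l_1+\cdots+l_m,r}$ for $A=\mathbb{C}[z]/(z^{r+1})$.

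Since each $J^l$ is a derivation, $\exp(\sum_l z^l J^l)$ is an algebra homomorphism. By the same mechanism as Lemma~\ref{L-JXY}, the $z^r$-coefficient of a product of $k$ factors is the sum, over all splittings $r=l_1+\dots+l_k$, of the product of the $z^{l_i}$-coefficients of the factors.

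We then compute these coefficients factor by factor.
\begin{itemize}[label=$\bullet$]
\item For a function $g(u^{0})$ of the base variables, the $z^{l}$-coefficient of $\exp(\sum z^lJ^l)g$ is $\sum_m\frac1{m!}\sum_{l_1+\dots+l_m=l}u^{\alpha_1,l_1}\cdots u^{\alpha_m,l_m}\partial^m g$. This is precisely Morimoto's $g^{(l)}$ in~\eqref{eq: f-lambda}, with $g^{(0)}=g$.
\item For $\theta_{\alpha}^{s}$, identified with $\theta^s_{\alpha,r}$, the derivation acts linearly: $J^l\theta_{\alpha,r}^s=\theta^s_{\alpha,r-l}$ and higher powers vanish. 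So the $z^{l}$-coefficient is $\theta^s_{\alpha,r-l}$.
\item For $u^{\gamma,1}$, identified with $u^{\gamma,0,1}$, the $z^l$-coefficient is $u^{\gamma,l,1}$.
\end{itemize}

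Assembling these, the $g$-term splits as $r=l_1+l_2+l_3$ across the factors $g^{\alpha\beta}$, $\theta_\alpha$ and $\theta_\beta^1$. This gives the first sum in~\eqref{eq:r-lift-bracket}. The connection term splits into four indices $l_1+l_2+l_3+l_4=r$ across $\Gamma$, $u^{\gamma,1}$, $\theta_\alpha$ and $\theta_\beta$, which gives the second sum.

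The main obstacle is bookkeeping rather than substance. First, the statement as written writes $(\Gamma^\beta_{\nu\gamma})^{(l_1)}$ in the second sum, and the index placement must be matched to $\Gamma^{\alpha\beta}_\gamma$. I would read this as a typographical issue and prove it for the contravariant symbols appearing in $\bar P$. Second, the $m=0$ term must be handled correctly: it contributes only when $r=0$, and otherwise the range $m\ge1$ is harmless. Finally, one should note that the normalisations $\tfrac12$ and $\int\!dx$ pass through the lift by linearity.
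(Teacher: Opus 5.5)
Your proposal is correct and is essentially the paper's argument: the paper only says the formula follows ``by immediate substitution'' into the explicit expression for ${}^A\!\bar P$, and your bookkeeping via the $z^r$-coefficient of the algebra homomorphism $\exp(\sum_{l} z^l J^l)$ (the same mechanism as in Lemma~\ref{L-JXY}) is exactly what makes that substitution explicit. You are also right to read $(\Gamma^\beta_{\nu\gamma})^{(l_1)}$ as a misprint for $(\Gamma^{\alpha\beta}_\gamma)^{(l_1)}$, and to use the convention $(g^{\alpha\beta})^{(0)}=g^{\alpha\beta}$, which the displayed Morimoto formula technically omits because its sum starts at $m=1$.
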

Note that up to rearranging the indices the two summands of~\eqref{eq:r-lift-bracket} match precisely the lifts described in Eqs.~\eqref{eq:moto-lift-g},~\eqref{eq:moto-lift-contra}. Thus we obtained the following
\begin{corollary} In the case $A=\mathbb{C}[z]/(z^{r+1})$ the lifts of the inversed metric and contravariant connection prescribed by the Frobenius algebra lift of a non-degenerate hydrodynamic Poisson structure by Def.~\ref{def:lift-polyvector} coincide with complete lifts in the sense of Morimoto, described in Sec.~\ref{sec:lambda-lift}. 
\end{corollary}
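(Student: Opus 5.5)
The claim is that the two summands of \eqref{eq:r-lift-bracket} coincide, after reindexing, with the Morimoto complete lifts \eqref{eq:moto-lift-g} of $g^{\alpha\beta}$ and \eqref{eq:moto-lift-contra} of the contravariant connection. We verify this coefficientwise, reading off from a hydrodynamic bracket density the coefficients of $\theta_{\alpha,a}\theta^1_{\beta,b}$ and of $u^{\gamma,c,1}\theta_{\alpha,a}\theta_{\beta,b}$.

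\textbf{Structure of the lift.} For $A=\mathbb{C}[z]/(z^{r+1})$ with $\int f_i f_j=\delta_{i+j,r}$, the operator $\mathbb{J}_0=\int\exp(\sum_{l\ge1} z^l J^l)$ acts as follows.
\begin{itemize}
\item It replaces $u^{\alpha,s}$ by $\sum_l z^l u^{\alpha,l,s}$.
\item It replaces $\theta_\alpha^s$ by $\sum_l z^l\theta^{l,s}_\alpha$, where $\theta^{l,s}_\alpha=\omega^{lk}\theta^s_{\alpha,k}=\theta^s_{\alpha,r-l}$.
\item It then extracts the coefficient of $z^r$.
\end{itemize}

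\textbf{Expanding the density.} Apply this to $g^{\alpha\beta}(u)\,\theta_\alpha\theta^1_\beta$. The coefficient function contributes $\sum_\lambda z^\lambda (g^{\alpha\beta})^{(\lambda)}$, where $(g^{\alpha\beta})^{(\lambda)}$ is exactly formula \eqref{eq: f-lambda}: the degree-$\lambda$ part of a Taylor expansion with weights $l_i$. Taking the $z^r$ coefficient of the product gives
\[
\sum_{l_1+l_2+l_3=r}(g^{\alpha\beta})^{(l_1)}\,\theta_{\alpha,r-l_2}\,\theta^1_{\beta,r-l_3}.
\]
This is the first summand of \eqref{eq:r-lift-bracket}. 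The $\Gamma$ term is handled the same way, with the extra factor $u^{\gamma,1}$ contributing $z^{l_2}u^{\gamma,l_2,1}$.

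\textbf{Comparing with \eqref{eq:moto-lift-g}.} Set $\lambda=r$. Under the Dubrovin–Novikov dictionary, the contravariant tensor component pairing $\partial/\partial u^{\alpha,a}$ with $\partial/\partial u^{\beta,b}$ corresponds to the coefficient of $\theta_{\alpha,a}\theta^1_{\beta,b}$. Substitute $l_2\to r-a$ and $l_1\to r-b$ in Morimoto's double sum over $0\le l_2\le l_1\le r$. The coefficient of $\partial_{u^{\alpha,a}}\otimes\partial_{u^{\beta,b}}$ becomes $(g^{\alpha\beta})^{(a+b-r)}$, nonzero only if $a+b\ge r$. On our side we set $a=r-l_2$ and $b=r-l_3$, which forces $l_1=a+b-r$. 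The two sums therefore match term by term.

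\textbf{Comparing with \eqref{eq:moto-lift-contra}.} Fix the indices $(\gamma,c)$, $(\alpha,a)$, $(\beta,b)$. Our coefficient of $u^{\gamma,c,1}\theta_{\alpha,a}\theta_{\beta,b}$ is $(\Gamma^{\alpha\beta}_\gamma)^{(a+b-c-r)}$, obtained from $l_2=c$, $l_3=r-a$, $l_4=r-b$. In Morimoto's formula, set $l=c$, $r-l_2=a$ and $l_1=b$. The exponent is then $l_1-l-l_2=b-c-(r-a)$, which is the same. The range constraints $l\le l_1\le r$ and $0\le l_2\le l_1-l$ translate exactly into nonnegativity of $a+b-c-r$ together with $0\le a,b\le r$.

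\textbf{Main obstacle.} The delicate point is the identification of conventions, not the algebra.
\begin{itemize}
\item The Christoffel term in the bracket is antisymmetrized in $\theta_\alpha\theta_\beta$, while Morimoto's connection is not. The comparison must therefore be made modulo this symmetrization, or through the standard Dubrovin–Novikov relation $\Gamma^{\alpha\beta}_\gamma$ versus $\Gamma^\beta_{\nu\gamma}$. This accounts for the phrase "up to rearranging the indices".
\item The index convention is inverted: the coordinate $\theta_{\alpha,k}$ is dual to $u^{\alpha,r-k}$ because $\omega$ is antidiagonal.
\end{itemize}
I would state both conventions explicitly before the index matching, so that the reindexing is unambiguous.
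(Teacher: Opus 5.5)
Your route is the paper's own: compute $\mathbb{J}_0\mathsf{I}$ as the substitution $u^{\alpha,s}\mapsto\sum_l z^l u^{\alpha,l,s}$, $\theta^s_\alpha\mapsto\sum_l z^l\theta^s_{\alpha,r-l}$, extract the $z^r$ coefficient to get \eqref{eq:r-lift-bracket}, and then match coefficients with \eqref{eq:moto-lift-g} and \eqref{eq:moto-lift-contra}. Your end results are correct: the coefficients $(g^{\alpha\beta})^{(a+b-r)}$ and $(\Gamma^{\alpha\beta}_\gamma)^{(a+b-c-r)}$, their ranges, and the connection reindexing $l=c$, $r-l_2=a$, $l_1=b$.

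Two slips should be fixed.

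\emph{The metric substitution.} The substitution you state, $l_2\to r-a$, $l_1\to r-b$, does not isolate the component $\partial_{u^{\alpha,a}}\otimes\partial_{u^{\beta,b}}$. With it, the first slot of \eqref{eq:moto-lift-g} at $\lambda=r$ carries level $r+l_2-l_1=r-a+b$. The correct change of variables is $l_1=b$, $l_2=a+b-r$. The constraint $0\le l_2\le l_1\le r$ then becomes $a+b\ge r$ with $0\le a,b\le r$, and you get $(g^{\alpha\beta})^{(a+b-r)}$ as claimed.

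\emph{The duality remark.} Your closing claim that $\theta_{\alpha,k}$ is dual to $u^{\alpha,r-k}$ contradicts the dictionary you actually use, $\theta_{\alpha,a}\leftrightarrow\partial_{u^{\alpha,a}}$, which is the correct one. The Schouten bracket on ${}^A\!\hat{\mathcal{A}}$ pairs $\theta_{\alpha,k}$ with $u^{\alpha,k}$. It is the raised variable $\theta^k_\alpha=\omega^{kl}\theta_{\alpha,l}=\theta_{\alpha,r-k}$ that carries the reversed level. Taken literally, your remark would break the match. The coefficient $(g^{\alpha\beta})^{(a+b-r)}$ of $\theta_{\alpha,a}\theta^1_{\beta,b}$ would then have to equal Morimoto's component along $\partial_{u^{\alpha,r-a}}\otimes\partial_{u^{\beta,r-b}}$, which is $(g^{\alpha\beta})^{(r-a-b)}$.

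With these corrections the argument is complete, and it spells out the reindexing that the paper leaves implicit.
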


This corollary generalizes the $r=1$ case considered in~\cite[Prop.~9]{DVLS}.

\section{Lift of the bi-Hamiltonian structure of the DR hierarchy}

In this section we show that the formula of~\cite{BRS-bracket} provides a bi-Hamiltonian formulation of the DR hierarchy of any \emph{non-semisimple} homogeneous P-CohFT obtained as the lift of a homogeneous semisimple CohFT.

\subsection{Bi-Hamiltonian structure of the DR hierarchy} 

In~\cite{Buryak-DR}, the so-called DR hierarchy was associated with any cohomological field theory, but the construction also works for P-CohFTs, as noted in~\cite{BR-spin}. It is a system of Hamiltonian evolutionary PDEs, where the Hamiltonian structure only depends on the target vector space and the inner product of the P-CohFT, and is given by
\begin{align} \label{eq:Br1}
    \overline{\mathrm{Br}}_{1}\coloneqq \frac{1}{2} \int dx\, \eta^{\alpha\beta} \theta_\alpha^0 \theta_\beta^1 .
\end{align}
It is conjectured in~\cite{BRS-bracket} that in the conformal case the DR hierarchy possesses a bi-Hamiltonian structure (strictly speaking, this conjecture is given in~\emph{op.~cit} for cohomological field theories, but it is natural to extend it to the case of P-CohFTs). In order to recall the conjectural formula proposed in~\cite{BRS-bracket}, we remind the reader that the
construction of the DR hierarchy involves a local functional $\overline{G}$, which, for a P-CohFT $\{\mathrm{Cl}_{g,n}\}$, is defined as
\begin{align}\label{eq:G}
    \overline{G} & \coloneqq \int dx \sum_{\substack{g,n\geq 0 \\ 2g-2+n>0}} \frac{(-\epsilon^2)^g}{n!} \sum_{\substack{d_1,\dots,d_n \geq 0 \\ d_1+\cdots+d_n=2g}} \prod_{i=1}^n u^{\alpha_i,d_i} \times 
    \\ \notag & \qquad \int_{\oM_{g,n}} \mathrm{Cl}_{g,n}(\otimes_{i=1}^n e_{\alpha_i}) \Coeff{\prod_{i=1}^n a_i^{d_i}} \mathbb{D}_g(a_1,\dots,a_n) .
\end{align}
Here $\mathbb{D}_g(a_1,\dots,a_n) \in R^{2g} (\oM_{g,n},\mathbb{C}) \otimes \mathbb{C}[a_1,\dots,a_n]$ is a class whose restriction to any $a_1,\dots,a_n\in \mathbb{Z}$ such that $a_1+\cdots+a_n=0$ coincides with the class $\lambda_g \mathrm{DR}_g(a_1,\dots,a_n)$, where $\lambda_g$ is the top Chern class of the Hodge bundle and $\mathrm{DR}_g(a_1,\dots,a_n)$ is the double ramification cycle. The choice of $\mathbb{D}_g(a_1,\dots,a_n)$ is not unique, and this choice affects the choice of the density $G$ of $\overline{G}$, but only up to an $\partial_x$-exact term (for a discussion of possible choices see e.~g.~\cite[Sec.~1]{rossi2025quantumintegrablehierarchygromovwitten}). 
The Hamiltonians of the DR hierarchy are given by
\begin{align} \label{eq:Ham}
    \overline H_{\alpha,p} & \coloneqq \int dx \sum_{\substack{g,n\geq 0 \\ 2g-1+n>0}} \frac{(-\epsilon^2)^g}{n!} \sum_{\substack{d_1,\dots,d_n \geq 0 \\ d_1+\cdots+d_n=2g}} \prod_{i=1}^n u^{\alpha_i,d_i} \times 
    \\ \notag & \qquad \int_{\oM_{g,n+1}} \psi_1^p \mathrm{Cl}_{g,n+1}(e_\alpha\otimes \otimes_{i=1}^n e_{\alpha_i})   \Coeff{\prod_{i=1}^n a_i^{d_i}} \mathbb{D}_g(0,a_1,\dots,a_n),
\end{align}
for $\alpha=1,\dots,N$ and $p=0,1,2,\dots$.
The flows of the DR hierarchy read
\begin{align}
    \partial_{t^{\beta,p}} \overline F = [[\overline{\mathrm{Br}}_{1},\overline H_{\alpha,p}],\overline F]
\end{align}
for any local functional $F\in \hat{\mathcal{F}}^0$.

Let us assume that the P-CohFT $\{\mathrm{Cl}_{g,n}\}$ is conformal. For any choice of the density $G$ of $\overline{G}$, define the bivector~\cite[Def.~1.12]{BRS-bracket}:
\begin{align} \label{eq:Br2}
	\overline{\mathrm{Br}}_{2} \coloneqq -	\int\! dx\ \bigg(\sum_{p\geq 0} \theta_\alpha^{p+1} \eta^{\alpha\beta} \partial_{u^{ \beta,p}}\bigg) \bigg(\sum_{q\geq 0} \theta_\mu^q {\Big(\sfQ + q -\frac 12 +\frac \sfD 2\Big)^{\mu}}_\alpha \eta^{\alpha\beta} \partial_{u^{\beta,q}} G  {+\frac{1}{2} \theta_\alpha \eta^{\alpha\beta} u^{\gamma} \sfB^\nu  C_{\beta\gamma\nu}} \bigg).
\end{align}

\begin{conjecture} \label{conj:bi-Ham} 
For any homogenenous P-CohFT the bi-vector $\overline{\mathrm{Br}}_{2}$ is Poisson, that is, $[\overline{\mathrm{Br}}_{2},\overline{\mathrm{Br}}_{2}]=0$.
Moreover, we have
\begin{align} \label{eq:bi-Ham-reduction}
    [\overline{\mathrm{Br}}_{2},\overline H_{\alpha,p}] = [\overline{\mathrm{Br}}_{1},\overline H_{\mu,p+1}] {\Big( p+ \frac 52 -\frac\sfD 2  - \sfQ \Big)^\mu}_\alpha + [\overline{\mathrm{Br}}_{1},\overline H_{\mu,p}] \eta^{\mu\beta} \sfB^\gamma C_{\alpha\beta\gamma},
\end{align}
for any $\alpha=1,\dots,N$, $p=0,1,2,\dots$.
\end{conjecture}

\begin{remark} This conjecture is formulated in~\cite[Conj.~1.13]{BRS-bracket} for an arbitrary homogeneous cohomological field theory. But the methods of \emph{op.~cit.} and all subsequent developments in~\cite{BR-spin,BDGR1,BS22,blot2026descendantsintegrableobservablescohomological} work also in the case of P-CohFTs, so it is natural to consider this conjectures in this generalized context.
Moreover, observe that $[\overline{\mathrm{Br}}_{1},\overline{\mathrm{Br}}_{1}]=[\overline{\mathrm{Br}}_{1},\overline{\mathrm{Br}}_{2}]=0$, so if we prove that $\overline{\mathrm{Br}}_{2}$ is Poisson, we obtain a pair of compatible Poisson structures and Eq.~\eqref{eq:bi-Ham-reduction} establishes the bi-Hamiltonian recursion. Notice also that \cite[Conj.~1.13]{BRS-bracket} has been proved for semisimple homogeneous cohomological field theories in~\cite{Buryak-Rossi-proof-bracket}.
\end{remark}

\begin{theorem}\label{thm:lift} Conjecture~\ref{conj:bi-Ham} holds for any P-CohFT obtained as a lift of a homogenenous \emph{semisimple} cohomological field theory.
\end{theorem}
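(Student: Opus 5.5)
The plan is to show that every object in Conjecture~\ref{conj:bi-Ham} for the lifted P-CohFT $\{{}^A\!\mathrm{Cl}_{g,n}\}$ is the lift, in the sense of Definition~\ref{def:lift-polyvector}, of the corresponding object for $\{\mathrm{Cl}_{g,n}\}$. Once this is done, Proposition~\ref{prop:Schouten commutes with lift} transports the identities proved in~\cite{Buryak-Rossi-proof-bracket} for the semisimple homogeneous CohFT $\{\mathrm{Cl}_{g,n}\}$ to the lifted theory.

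\textbf{Step 1: the functional $\overline G$ and the Hamiltonians.} Substituting~\eqref{eq:b-Cohft-def} into~\eqref{eq:G} for ${}^A\!\mathrm{Cl}$ gives $\prod_i u^{\alpha_i,l_i,d_i}$ weighted by $\int f_{l_1}\cdots f_{l_n}$. This equals the trace of $G$ after the substitution $u^{\alpha,d}\to\sum_l u^{\alpha,l,d}f_l$, by the same argument as in the proof of the corollary on the lift of the potential. That substitution is exactly $\int \exp(\sum_{l\ge2}f_lJ^l)\mathsf{I}$ applied to a $\theta$-free density, so ${}^A G=\mathbb{J}_1\mathsf{I}G$. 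For $\overline H_{\alpha,p}$ the special first point $e_{\alpha,i}$ carries an extra $f_i$, and the same computation gives ${}^A\!H_{(\alpha,i),p}=\mathbb{J}_i\mathsf{I}H_{\alpha,p}$. Lowering the index with $\omega$, we get $\omega^{ij}\,{}^A\!H_{(\alpha,j),p}=\mathbb{J}^{i}\mathsf I H_{\alpha,p}$, where $\mathbb J^i = \int \omega^{ij}f_j\exp(\cdots)$. The useful special cases are ${}^A\!\overline H_{(\alpha,1),p}={}^A\!\overline H_{\alpha,p}$, and in general ${}^A\!\overline H_{(\alpha,i),p}$ is the ``$f_i$-twisted lift''. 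The choice of $\mathbb{D}_g$ is common to both theories, so this choice causes no trouble.

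\textbf{Step 2: the brackets.} A direct check shows ${}^A\overline{\mathrm{Br}}_1=\overline{\mathrm{Br}}_1$ for the metric ${}^A\!\eta$. Indeed, $\mathbb{J}_1\mathsf I(\eta^{\alpha\beta}\theta_\alpha\theta^1_\beta)=\eta^{\alpha\beta}\sum\int f_1 f_jf_k\,\theta^{j}_\alpha\theta^{k,1}_\beta$ by Lemma~\ref{L-JXY}, and $\int f_jf_k=\omega_{jk}$ turns this into $({}^A\!\eta)^{-1}$ in the $\theta_{\alpha,l}$ variables. For $\overline{\mathrm{Br}}_2$, I would rewrite~\eqref{eq:Br2} as $\int dx\, \sum X^{\alpha}_{p}\,Y_{\alpha,p}$, where each factor is obtained from $G$ by $\partial_{u^{\beta,q}}$ and multiplication by $\theta$'s. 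Using the analogue of~\eqref{eq:delta-I} for partial derivatives, $\partial_{u^{\beta,j,q}}\mathbb{J}_1\mathsf I=\mathbb{J}_j\mathsf I\partial_{u^{\beta,q}}$, together with Lemma~\ref{L-JXY}, this yields ${}^A\overline{\mathrm{Br}}_2=\overline{\mathrm{Br}}_2[{}^A\!\mathrm{Cl}]$. The weights of Proposition~\ref{prop:Lift-CohFT} enter diagonally in $l$, with ${}^A\!\sfB$ supported on $l=1$, and ${}^A C_{\alpha,k;\beta,l;\gamma,1}=\omega_{kl}C_{\alpha\beta\gamma}$ handles the $\sfB$-term.

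\textbf{Step 3: conclusion and main obstacle.} Poissonity follows at once: $[{}^A\overline{\mathrm{Br}}_2,{}^A\overline{\mathrm{Br}}_2]={}^A[\overline{\mathrm{Br}}_2,\overline{\mathrm{Br}}_2]=0$ by Proposition~\ref{prop:Schouten commutes with lift} and~\cite{Buryak-Rossi-proof-bracket}. The recursion~\eqref{eq:bi-Ham-reduction} for index $(\alpha,1)$ is the lift of the original recursion. For general $(\alpha,i)$ the Hamiltonians are twisted lifts, and this is the main obstacle. To handle it, I would extend Proposition~\ref{prop:Schouten commutes with lift} to $[\mathbb J_1\mathsf I\overline P,\mathbb J_i\mathsf I\overline Q]=\mathbb J_i\mathsf I[\overline P,\overline Q]$. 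The same proof works, using the identity $\int f_if_jf_k\,\omega^{jm}\omega^{kl}\ldots$ from Lemma~\ref{L-JXY} with $\mathbb J_1$ on one side. Since $\mathbb J_i\mathsf I$ is linear and the coefficients $(p+\tfrac52-\tfrac\sfD2-\sfQ)$ and the $\sfB C$ term are constants, applying $\mathbb J_i\mathsf I$ to the original recursion gives the lifted recursion for all $(\alpha,i)$. The remaining check is that the $\sfB$-term matches, i.e.\ $\omega^{..}{}^A C$ contracts to $\int f_i\cdot$, which follows from ${}^A\!\sfB^{\gamma,k}=\delta^k_1\sfB^\gamma$.
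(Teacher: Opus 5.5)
Your proposal is correct and follows essentially the same route as the paper. You identify $\overline{\mathrm{Br}}_{i}\{{}^A\mathrm{Cl}\}$ with ${}^A\overline{\mathrm{Br}}_{i}$ and $\overline H_{\alpha,l,p}\{{}^A\mathrm{Cl}\}$ with $\int dx\,\mathbb J_l\mathsf I H_{\alpha,p}$, then obtain Poissonity from Proposition~\ref{prop:Schouten commutes with lift} and \cite{Buryak-Rossi-proof-bracket}. For the recursion you use the twisted compatibility $[\mathbb J_1\mathsf I\overline P,\mathbb J_l\mathsf I\overline Q]=\mathbb J_l\mathsf I[\overline P,\overline Q]$, which the paper derives from $\delta_{u^{\alpha,i}}\mathbb J_l\mathsf I=\int f_if_lf_j\,\omega^{jk}\mathbb J_k\mathsf I\delta_{u^\alpha}$ together with Lemma~\ref{L-JXY}, applied to the Buryak--Rossi identity; that identity holds modulo $\partial_x$-exact terms and is preserved because $\mathbb J_l\mathsf I$ commutes with $\partial_x$.
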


In order to prove this theorem, we first have to check how the lift procedure affects the bi-vector $\overline{\mathrm{Br}}_{2}$ and the Hamiltonians, and then use the result of~\cite{Buryak-Rossi-proof-bracket} that states that Conjecture~\ref{conj:bi-Ham} holds for an arbitrary semisimple homogeneous CohFT.

\subsection{Lift of the brackets} 

Let $\{\mathrm{Cl}_{g,n}\}$ be a partial CohFT and $\{ {}^A\!\mathrm{Cl}_{g,n} \}$ its lift with respect to a Frobenius algebra $A$. Let $\overline{\mathrm{Br}}_1 \{\mathrm{Cl}\}, \overline{\mathrm{Br}}_2 \{\mathrm{Cl}\}$ (respectively, $\overline{\mathrm{Br}}_1 \{{}^A\!\mathrm{Cl}\}, \overline{\mathrm{Br}}_2 \{{}^A\!\mathrm{Cl}\}$) be the bi-vectors associated by Eqs~\eqref{eq:Br1},~\eqref{eq:Br2} to 
$\{\mathrm{Cl}_{g,n}\}$ (respectively, $\{{}^A\!\mathrm{Cl}_{g,n}\}$).

\begin{proposition} \label{prop:Br-commute-with-lift} We have: 
	\begin{align}
		{}^A\!\overline{\mathrm{Br}}_i\{\mathrm{Cl}\} = \overline{\mathrm{Br}}_i\{{}^A\!\mathrm{Cl}\}, \qquad i=1,2.
	\end{align}
\end{proposition}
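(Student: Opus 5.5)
We treat the two brackets separately; the case $i=1$ is a direct substitution and the case $i=2$ reduces to three facts about the lift. For $\overline{\mathrm{Br}}_1$, the density $\frac12\eta^{\alpha\beta}\theta_\alpha^0\theta_\beta^1$ is quadratic in $\theta$ and independent of $u$. Applying $\mathbb{J}_1\mathsf{I}$, only the terms with exactly two $J$'s, or with one $J$, or with none survive. Using $\theta^{1,s}_\alpha$ and the definition of $J^l$, we get
\[
\tfrac12\eta^{\alpha\beta}\Big(\int f_1 f_i f_j\Big)\theta^{i}_\alpha\theta^{j,1}_\beta=\tfrac12\eta^{\alpha\beta}\omega_{ij}\theta^{i}_\alpha\theta^{j,1}_\beta .
\]
This is the expansion of $\mathbb{J}_1$ as $\int f_1\exp(\sum f_lJ^l)$, with the index-$1$ terms included via the unit. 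Rewriting $\omega_{ij}\theta^i\theta^j$ as $\omega^{kl}\theta_{\alpha,k}\theta_{\beta,l}^1$ gives exactly $\frac12({}^A\!\eta)^{\alpha,k,\beta,l}\theta_{\alpha,k}\theta^1_{\beta,l}$, which is $\overline{\mathrm{Br}}_1\{{}^A\!\mathrm{Cl}\}$.

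For $\overline{\mathrm{Br}}_2$ we first prove a lemma: ${}^A\!\overline G\{\mathrm{Cl}\}=\overline G\{{}^A\!\mathrm{Cl}\}$, indeed already at the level of densities, ${}^A G\{\mathrm{Cl}\}=G\{{}^A\!\mathrm{Cl}\}$. The argument copies the proof of the Corollary on the lift of the potential. The density $G\{{}^A\!\mathrm{Cl}\}$ is multilinear in the inserted classes. By~\eqref{eq:b-Cohft-def}, each term $\prod u^{\alpha_i,l_i,d_i}$ carries the factor $\int f_{l_1}\cdots f_{l_n}$. Hence $G\{{}^A\!\mathrm{Cl}\}$ is the trace of $G\{\mathrm{Cl}\}$ after the substitution $u^{\alpha,d}\mapsto\sum_l u^{\alpha,l,d}f_l$, and this equals $\mathbb{J}_1\mathsf{I}G$ by expanding the exponential. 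The class $\mathbb{D}_g$ is the same on both sides, so the choice of density is compatible.

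Next we transport the operators in~\eqref{eq:Br2}. Write $\mathrm{Br}_2=-\mathcal{X}\cdot\mathcal{Y}$, with $\mathcal{X}=\sum_p\theta^{p+1}_\alpha\eta^{\alpha\beta}\partial_{u^{\beta,p}}$ acting on $G$ through the second factor. We establish the analogue of~\eqref{eq:delta-I} for partial derivatives:
\[
\partial_{u^{\beta,j,p}}\mathbb{J}_1\mathsf{I}=\mathbb{J}_j\mathsf{I}\,\partial_{u^{\beta,p}},
\]
with upper index $j$ via $\omega$. This holds because $\mathbb{J}_1\mathsf I X$ is $\int$ of $X$ evaluated at $u=\sum u^{\cdot,l}f_l$, so $\partial_{u^{\beta,j,p}}$ produces a factor $f_j$. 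We then use the product rule of Lemma~\ref{L-JXY} in the form $\mathbb{J}_1(XY)=\omega^{ij}\mathbb{J}_i(X)\mathbb{J}_j(Y)$, and similarly for triple products. With these, each factor of the lifted density regroups. The $\theta^{p+1}$ factor becomes $\mathbb{J}_i\mathsf I(\theta_\alpha^{p+1})=\theta^{i,p+1}_\alpha$, paired with $\omega$ to give $({}^A\!\eta)$. The weights ${}^A\!\sfQ=\delta\otimes\sfQ$ and ${}^A\!\sfD=\sfD$ from Proposition~\ref{prop:Lift-CohFT} commute through the algebra indices, since they act diagonally. The $\sfB$-term needs $C^{A}_{\beta,j;\gamma,k;\nu,1}=\omega_{jk}C_{\beta\gamma\nu}$ together with ${}^A\!\sfB^{\nu,l}=\delta^l_1\sfB^\nu$. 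Its lift is $\int f_if_jf_k\,\theta^i u^{j}$-type contractions, which match because the third index is the unit.

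The main obstacle is this bookkeeping in the $G$-term. One must show that $\mathbb{J}_1\mathsf I$ of the product $(\theta\partial_u)(\theta\,\partial_u G)$ equals the $A$-contracted product $\omega^{ij}({}^A\text{-terms})$. This needs the three-index identity $\omega^{jm}\omega^{kl}\int f_if_jf_k\int f_m a\int f_l b=\int f_i ab$ used twice, together with the partial-derivative version of~\eqref{eq:delta-I} applied to $\partial_{u^{\beta,q}}G$. We would first verify this at the level of generating substitutions: with $u\mapsto\sum u^lf_l$ and $\theta^{(A)}=\sum\theta^{l}f_l$, the whole density of $\mathrm{Br}_2\{{}^A\!\mathrm{Cl}\}$ is $\int$ of the $A$-valued expression obtained from $\mathrm{Br}_2\{\mathrm{Cl}\}$. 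The $A$-linearity of $\eta$ and $\sfQ$ makes this transparent, and it reduces the proof to the single-variable identity already used for $\mathbb{J}$.
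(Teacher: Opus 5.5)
Your proposal is correct and follows essentially the paper's route. The paper computes $\overline{\mathrm{Br}}_1$ directly. For $\overline{\mathrm{Br}}_2$ it conjugates the two operators through $\mathbb{J}_1\mathsf{I}$, which you justify via $\partial_{u^{\beta,j,p}}\mathbb{J}_1\mathsf{I}=\mathbb{J}_j\mathsf{I}\,\partial_{u^{\beta,p}}$ and the product rule of Lemma~\ref{L-JXY}. This reduces the claim to the density-level identity for $G$ (proved by the trace-of-substitution argument) and to a direct check of the $\sfB$-term using ${}^A\!\sfB^{\nu,l}=\delta^l_1\sfB^\nu$.

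One index slip to fix: $\mathbb{J}_i\mathsf{I}(\theta^{p+1}_\alpha)=\omega_{il}\theta^{l,p+1}_\alpha=\theta^{p+1}_{\alpha,i}$, with a lower algebra index, not $\theta^{i,p+1}_\alpha$ as you wrote. It is precisely this that makes $\omega^{ij}\theta^{p+1}_{\alpha,i}\eta^{\alpha\beta}\partial_{u^{\beta,j,p}}$ the ${}^A\!\eta$-contraction, as in the paper's conjugation formula.
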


\begin{proof} For the first bracket, the statement is essentially about the inner product. Indeed, 
\begin{align}
    {}^A\!\overline{\mathrm{Br}}_1\{\mathrm{Cl}\} & = \frac{1}{2} \int dx \, \mathbb{J}_1 \mathsf{I} \big( \eta^{\alpha\beta}\theta_\alpha^0 \theta_\beta^1 \big) = \frac{1}{2} \int dx \, \mathbb{J}_1 \big( \eta^{\alpha\beta}\theta_\alpha^{1,0} \theta_\beta^{1,1} \big)
    = \frac{1}{2} \int dx \, \omega_{jl} \eta^{\alpha\beta}\theta_\alpha^{j,0} \theta_\beta^{l,1} 
    \\ \notag 
    & = \frac{1}{2} \int dx \, \omega^{jl} \eta^{\alpha\beta}\theta_{\alpha,j}^{0} \theta_{\beta,l}^{1} = \frac{1}{2} \int dx \, {}^A\!\eta^{\alpha,j,\beta,l}\theta_{\alpha,j}^{0} \theta_{\beta,l}^{1} = \overline{\mathrm{Br}}_1\{{}^A\!\mathrm{Cl}\}.
\end{align}
For the second bracket, we first notice that by explicit conjugation
\begin{align}
    \mathbb{J}_1 \mathsf{I} \bigg(\sum_{p\geq 0} \theta_\alpha^{p+1} \eta^{\alpha\beta}\partial_{u^{ \beta,p}}\bigg) & = 
    \bigg(\sum_{p\geq 0} \theta_{\alpha,l}^{p+1}\, {}^{A}\!\eta^{\alpha,l,\beta,j} \, \partial_{u^{\beta,j,p}}\bigg) \mathbb{J}_1 \mathsf{I},
    \\ \notag 
    \mathbb{J}_1 \mathsf{I} \bigg(\sum_{q\geq 0} \theta_\mu^q {\Big(\sfQ + q -\frac 12 +\frac \sfD 2\Big)^{\mu}}_\alpha \eta^{\alpha\beta} \partial_{u^{\beta,q}} \bigg) & = \bigg(\sum_{q\geq 0} \theta_{\mu,l}^q {\Big({}^A\!\sfQ + q -\frac 12 +\frac {{}^A\!\sfD} 2\Big)^{\mu,l}}_{\alpha,j}\,{}^A\!\eta^{\alpha,j,\beta,k} \partial_{u^{\beta,k,q}} \bigg) \mathbb{J}_1 \mathsf{I}.
\end{align}
Thus, it is sufficient to check that 
\begin{align} \label{eq:check-G}
    {}^A\!\overline{G}\{\mathrm{Cl}\} = \overline{G}\{{}^A\!\mathrm{Cl}\},
\end{align}
where $\overline{G}\{\mathrm{Cl}\}$ (respectively, $\overline{G} \{{}^A\!\mathrm{Cl}\}$) is the functional associated by Equation~\eqref{eq:G} to $\{\mathrm{Cl}_{g,n}\}$ (respectively, $\{{}^A\!\mathrm{Cl}_{g,n}\}$) and the similar statement for the vector field 
\begin{align}
\overline{X} \coloneqq \int dx \, \theta_\alpha \eta^{\alpha\beta} u^{\gamma} \sfB^\nu  C_{\beta\gamma\nu},    
    \end{align}
where, again, we shall use the notation $\overline{X}\{\mathrm{Cl}\}$ (respectively, $\overline{X} \{{}^A\!\mathrm{Cl}\}$). Notice that the $\overline{\mathrm{Br}}_{2}$ formula given in~\eqref{eq:Br2} is invariant under addition of $\partial_x$-exact terms to either $G$ or $X$. 

First, we compute ${}^A\!\overline{G}\{\mathrm{Cl}\}$. To this end note that 
\begin{align}
    & {\phantom{\bigg|}}^A\!\bigg( \int dx \prod_{i=1}^n u^{\alpha_i,d_i}  \mathrm{Cl}_{g,n}(\otimes_{i=1}^n e_{\alpha_i}) \bigg)
    = \int dx\, \mathbb{J}_1 \prod_{i=1}^n u^{\alpha_i,1,d_i}   \mathrm{Cl}_{g,n}(\otimes_{i=1}^n e_{\alpha_i})
    \\ \notag &
    = \int dx\, \prod_{i=1}^n u^{\alpha_i,l_i,d_i}  \mathrm{Cl}_{g,n}(\otimes_{i=1}^n e_{\alpha_i}) \int f_1 \prod_{i=1}^n f_{l_i} 
    = \int dx\, \prod_{i=1}^n u^{\alpha_i,l_i,d_i}  {}^A\!\mathrm{Cl}_{g,n}(\otimes_{i=1}^n e_{\alpha_i,l_i}). 
\end{align}
This implies Eq.~\eqref{eq:check-G}, since ${}^A\!\overline{G}\{\mathrm{Cl}\}$ (resp., $\overline{G}\{{}^A\!\mathrm{Cl}\}$) is obtained by applying to the left hand side (resp., right hand side) of this identity the operator $\sum_{g,n} \frac{(-\epsilon^2)^g}{n!}\int_{\oM_{g,n}} \Coeff{\prod_{i=1}^n a_i^{d_i}} \mathbb{D}_g(a_1,\dots,a_n)$ and further taking the sum over $d_1,\dots,d_n$. 

Second, let us check the similar statement for the vector $\overline{X}$. Indeed, we have
 \begin{align}\label{eq:computation-X}
    {}^A\!\overline{X}\{\mathrm{Cl}\} &= \int{dx\, \mathbb{J}_1 
    \mathsf{I} \Big( \theta_\alpha^{1} \eta^{\alpha\beta} u^{\gamma, 0} \sfB^\nu C_{\beta\gamma \nu} \Big) } = \int{dx\, \mathbb{J}_1 ( \theta_\alpha^{1,1} u^{\gamma,1,0} \sfB^\nu C_{\beta\gamma \nu} ) }
    \\ \notag 
     &=\int{dx\,  \theta_\alpha^{l,1} u^{\gamma,j,0} \sfB^\nu C_{\beta\gamma \nu} }  \int f_lf_jf_1
     = \int{dx\,  \theta_{\alpha,k}^{1} \omega^{kl} u^{\gamma,j,0}\, {}^A\! \sfB^{\nu,m} \big( C_{\beta\gamma \nu}   \int f_lf_jf_m\big)},
 \end{align}
 where for the last identity we used that ${}^A\!\sfB^{\nu,m} = \sfB^\nu \delta^m_1$. 
 Moreover, observe that 
\begin{align}
C_{\beta\gamma \nu}   \int f_lf_jf_m
= \int_{\oM_{0,3}} \mathrm{Cl}(e_{\beta}\otimes e_{\gamma} \otimes e_{\nu} )\int f_lf_jf_m = \int_{\oM_{0,3}} {}^A\!\mathrm{Cl}(e_{\beta,l}\otimes e_{\gamma,j} \otimes e_{\nu,m}). 
\end{align}
This implies that the right hand side of Eq.~\eqref{eq:computation-X} is indeed equal to $\overline{X} \{{}^A\!\mathrm{Cl}\}$.
\end{proof}

\subsection{Lift of the Hamiltonians} Let $H_{\alpha,p}\{\mathrm{Cl}\}$ (respectively, $H_{\alpha,l,p} \{{}^A\!\mathrm{Cl}\}$) denote the densities of the functionals representing the Hamiltonians associated by Equation~\eqref{eq:Ham} to $\{\mathrm{Cl}_{g,n}\}$ (respectively, $\{{}^A\!\mathrm{Cl}_{g,n}\}$), for some choice of the polynomials $\mathbb{D}_g$. 

\begin{proposition} \label{prop:lift-Hamiltonians}
    We have
    \begin{align}
        \overline{H}_{\alpha,l,p} \{{}^A\!\mathrm{Cl}\} = \int dx \, \mathbb{J}_l \mathsf{I} {H}_{\alpha,p}\{\mathrm{Cl}\}.
    \end{align}
\end{proposition}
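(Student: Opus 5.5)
The plan is to mimic the computation already done for $\overline{G}$ in the proof of Proposition~\ref{prop:Br-commute-with-lift}, now keeping track of the extra distinguished marked point carrying the index $(\alpha,l)$. I work term by term in \eqref{eq:Ham}: for fixed $g,n$ and fixed $d_1,\dots,d_n$, the density of $\overline H_{\alpha,l,p}\{{}^A\!\mathrm{Cl}\}$ is a sum over $(\alpha_i,l_i)$ of $\prod_i u^{\alpha_i,l_i,d_i}$ times
\[
\int_{\oM_{g,n+1}}\psi_1^p\,{}^A\!\mathrm{Cl}_{g,n+1}\big(e_{\alpha,l}\otimes\textstyle\bigotimes_i e_{\alpha_i,l_i}\big)\,\Coeff{\prod_i a_i^{d_i}}\mathbb{D}_g(0,a_1,\dots,a_n).
\]
By the definition \eqref{eq:b-Cohft-def}, this equals $\big(\int f_l f_{l_1}\cdots f_{l_n}\big)$ times the corresponding integral for $\mathrm{Cl}_{g,n+1}(e_\alpha\otimes\bigotimes_i e_{\alpha_i})$. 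The same choice of $\mathbb{D}_g$ is used on both sides, which I assume as stated.

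The key step is to show that $\mathbb{J}_l$ applied to $\prod_{i=1}^n u^{\alpha_i,1,d_i}$ gives $\sum_{l_1,\dots,l_n}\big(\int f_l f_{l_1}\cdots f_{l_n}\big)\prod_i u^{\alpha_i,l_i,d_i}$. This is the same identity used for $\mathbb{J}_1$ in the proof of Proposition~\ref{prop:Br-commute-with-lift}, with $f_1$ replaced by $f_l$. To prove it, I write $\mathbb{J}_l=\int f_l\exp\big(\sum_{k\ge2}f_kJ^k\big)$, as in the proof of Lemma~\ref{L-JXY}. The exponential of the derivation acts as the algebra substitution $u^{\beta,1,s}\mapsto \sum_{k=1}^L f_k u^{\beta,k,s}$, using that $f_1$ is the unit. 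It leaves untouched the variables $u^{\beta,k,s}$ with $k\ge2$, since $J^k$ only differentiates in $u^{\cdot,1,\cdot}$ and does not touch its own output. Hence
\[
\exp\Big(\sum_k f_kJ^k\Big)\prod_i u^{\alpha_i,1,d_i}=\prod_i\Big(\sum_{l_i}f_{l_i}u^{\alpha_i,l_i,d_i}\Big),
\]
and taking $\int f_l\,\cdot$ gives the claim. Since $H_{\alpha,p}\{\mathrm{Cl}\}$ contains no $\theta$'s and $\mathsf{I}$ sends $u^{\beta,s}\mapsto u^{\beta,1,s}$, the density $\mathbb{J}_l\mathsf{I}H_{\alpha,p}\{\mathrm{Cl}\}$ is then exactly the sum above. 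Here $\mathbb{J}_l$ is linear and the coefficients $\int_{\oM_{g,n+1}}\cdots$ are scalars.

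Finally I sum over $g,n,d_i$ with the prefactors $(-\epsilon^2)^g/n!$, which are untouched by $\mathbb{J}_l\mathsf{I}$, and pass to $\int dx$; both $\mathbb{J}_l$ and $\mathsf{I}$ commute with $\partial_x$. The unstable term with $g=0$, $n=1$ needs a separate check, namely the range $2g-1+n>0$ that allows $\oM_{0,2}$-type conventions: for instance $H_{\alpha,0}$ contains $\frac12\eta_{\alpha\beta}u^\beta$-type pieces via the convention $\int_{\oM_{0,2}}\dots$. I expect this to be the main obstacle, though a minor one. I would handle it by reading the convention off \eqref{eq:Ham} and verifying that it is linear in the underlying pairing, namely ${}^A\!\eta_{\alpha l,\beta j}=\omega_{lj}\eta_{\alpha\beta}=\eta_{\alpha\beta}\int f_lf_j$. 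It then transforms by the same rule $\mathbb{J}_l$, and the identity holds termwise, hence for the functionals.
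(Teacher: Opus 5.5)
Your proposal is correct and follows the paper's proof: you reduce termwise, via \eqref{eq:b-Cohft-def}, to the identity $\mathbb{J}_l \prod_i u^{\alpha_i,1,d_i} = \sum_{l_1,\dots,l_n} \big(\int f_l f_{l_1}\cdots f_{l_n}\big)\prod_i u^{\alpha_i,l_i,d_i}$, exactly as in the proof of \eqref{eq:check-G}, and your exponential-substitution argument supplies the detail the paper leaves implicit. The unstable-term worry is vacuous, since the range $2g-1+n>0$ in \eqref{eq:Ham} is exactly the stability condition $2g-2+(n+1)>0$ for $\oM_{g,n+1}$, so $(g,n)=(0,1)$ is excluded and no $\oM_{0,2}$ convention ever enters.
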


\begin{proof} The proof repeats \emph{mutatis mutandis} the proof of Eq.~\eqref{eq:check-G} above. Indeed, using the same argument, it is sufficient to check that 
\begin{align}
& \int dx\, \prod_{i=1}^n u^{\alpha_i,l_i,d_i}  {}^A\!\mathrm{Cl}_{g,n+1}(e_{\alpha,l}\otimes \otimes_{i=1}^n e_{\alpha_i,l_i}) = \int dx\, \prod_{i=1}^n u^{\alpha_i,l_i,d_i}  \mathrm{Cl}_{g,n+1}(e_\alpha \otimes\otimes_{i=1}^n e_{\alpha_i}) \int f_l \prod_{i=1}^n f_{l_i} 
\\ \notag & 
= \int dx\, \mathbb{J}_l \prod_{i=1}^n u^{\alpha_i,1,d_i}  \mathrm{Cl}_{g,n+1}(e_\alpha \otimes \otimes_{i=1}^n e_{\alpha_i})  =  \int dx\, \mathbb{J}_l \mathsf{I} \prod_{i=1}^n u^{\alpha_i,d_i}  \mathrm{Cl}_{g,n+1}(e_\alpha \otimes \otimes_{i=1}^n e_{\alpha_i}).
\end{align}
\end{proof}

\subsection{Proof of Theorem~\ref{thm:lift}} First, we prove that $[\overline{\mathrm{Br}}_2\{{}^A\!\mathrm{Cl}\},\overline{\mathrm{Br}}_2\{{}^A\!\mathrm{Cl}\}]=0$. It is a direct corollary of 
\begin{itemize}
    \item Prop.~\ref{prop:Br-commute-with-lift} that identifies the formula for $\overline{\mathrm{Br}}_2\{{}^A\!\mathrm{Cl}\}$ of the lifted P-CohFT with the lifted $\overline{\mathrm{Br}}_2\{\mathrm{Cl}\}$ of the original semisimple CohFT $\{\mathrm{Cl}_{g,n}\}$;
    \item The result of~\cite[Thm.~3.4]{Buryak-Rossi-proof-bracket} that states that $\overline{\mathrm{Br}}_2\{\mathrm{Cl}\}$ of the original semisimple CohFT is Poisson, that it, it does satisfy $[\overline{\mathrm{Br}}_2\{\mathrm{Cl}\},\overline{\mathrm{Br}}_2\{\mathrm{Cl}\}]=0$; and
    \item Prop.~\ref{prop:Schouten commutes with lift} that states that the Schouten bracket of lifted bi-vector ${}^A\!\overline{\mathrm{Br}}_2\{\mathrm{Cl}\}$ with itself is equal to the lift of the Schouten bracket $\overline{\mathrm{Br}}_2\{\mathrm{Cl}\}$ with itself, thus does vanish. 
\end{itemize}

In order to establish the formula for the bi-Hamiltonian recursion~\eqref{eq:bi-Ham-reduction}, we use Prop.~\ref{prop:Br-commute-with-lift} and Prop.~\ref{prop:lift-Hamiltonians} as well as Prop.~\ref{prop:Lift-CohFT} in order to rewrite, for the lifted P-CohFT $\{{}^A\!\mathrm{Cl}_{g,n}\}$, the left hand side as 
\begin{align} \label{eq:biham-1}
    [\overline{\mathrm{Br}}_{2}\{{}^A\!\mathrm{Cl}\},\overline H_{\alpha,l,p}\{{}^A\!\mathrm{Cl}\}] = [{}^A\!\overline{\mathrm{Br}}_{2}\{\mathrm{Cl}\},\int dx \, \mathbb{J}_l \mathsf{I} {H}_{\alpha,p}\{\mathrm{Cl}\}]
\end{align}
and the two summands on the right hand side as 
\begin{align} \label{eq:biham-2}
    & [\overline{\mathrm{Br}}_{1}\{{}^A\!\mathrm{Cl}\},\overline H_{\mu,j,p+1}\{{}^A\!\mathrm{Cl}\}] {\Big( p+ \frac 52 -\frac{{}^A\!\sfD} 2  - {}^A\!\sfQ \Big)^{\mu,j}}_{\alpha,l} 
    \\ \notag & \qquad \qquad =[{}^A\!\overline{\mathrm{Br}}_{1}\{\mathrm{Cl}\},\int dx \, \mathbb{J}_l \mathsf{I} {H}_{\mu,p+1}\{\mathrm{Cl}\}] {\Big( p+ \frac 52 -\frac{\sfD} 2  - \sfQ \Big)^{\mu}}_{\alpha};
    \\ \label{eq:biham-3}
    & [\overline{\mathrm{Br}}_{1}\{{}^A\!\mathrm{Cl}\},\overline H_{\mu,j,p}\{{}^A\!\mathrm{Cl}\}] \, {}^A\!\eta^{\mu,j,\beta,m}\, {}^A\!\sfB^{\gamma,k} \int_{\oM_{0,3}}{}^A\!\mathrm{Cl}_{0,3}(e_{\beta,m}\otimes e_{\alpha,l} \otimes e_{\gamma,k}) 
    \\ \notag & \qquad \qquad 
    =[{}^A\!\overline{\mathrm{Br}}_{1}\{\mathrm{Cl}\},\int dx \, \mathbb{J}_j \mathsf{I} {H}_{\mu,p}\{\mathrm{Cl}\}]\, \omega^{jm} \eta^{\mu\beta}\sfB^{\gamma} C_{\alpha\beta\gamma} \int f_lf_mf_k \delta^k_1
     \\ \notag & \qquad \qquad 
    =[{}^A\!\overline{\mathrm{Br}}_{1}\{\mathrm{Cl}\},\int dx \, \mathbb{J}_l \mathsf{I} {H}_{\mu,p}\{\mathrm{Cl}\}]\, \eta^{\mu\beta}\sfB^{\gamma} C_{\alpha\beta\gamma} .   
\end{align}
We proceed with the computation of the right hand sides of~\eqref{eq:biham-1},~\eqref{eq:biham-2}, and~\eqref{eq:biham-3} omitting for brevity the explicit dependence $\{\mathrm{Cl}\}$ on the original cohomological field theory in the notation. We use Lemma~\ref{L-JXY} and the following direct generalization of the formulas~\eqref{eq:delta-I} %presented in Sec.~\ref{subsec: Lift with a Frobenius}:
\begin{align}
%    \mathbb{J}_l(XY) & = \int f_lf_if_j \omega^{ik} \omega^{jm} \mathbb{J}_k(X) \mathbb{J}_m(Y);
    \delta_{u^{\alpha,i}} \mathbb{J}_l   \mathsf{I} & = \int f_if_lf_j \cdot \omega^{jk} \mathbb{J}_k \mathsf{I}  \delta_{u^{\alpha}},
&
    \delta_{\theta_{\alpha}^i} \mathbb{J}_l \mathsf{I} =  \int f_if_lf_j \cdot \omega^{jk} \mathbb{J}_k \mathsf{I}  \delta_{\theta_{\alpha}} .
\end{align}
Then we have, for the right hand side of \eqref{eq:biham-1}, we compute in the vein of~\eqref{eq:Computation-Schouten-B}:
\begin{align} \label{eq:biham-1-expanded}
    & [{}^A\!\overline{\mathrm{Br}}_{2},\int dx \, \mathbb{J}_l \mathsf{I} {H}_{\alpha,p}] = [\int dx \mathbb{J}_1 \mathsf{I} {\mathrm{Br}}_{2},\int dx \, \mathbb{J}_l \mathsf{I} {H}_{\alpha,p}] = \int dx\, \omega^{ij} \delta_{\theta_{\beta}^i}\mathbb{J}_1 \mathsf{I} {\mathrm{Br}}_{2} \, \delta_{u^{\beta,j}} \mathbb{J}_l \mathsf{I} {H}_{\alpha,p}
    \\ \notag & 
    = \int dx\, \omega^{ij} \mathbb{J}_i \mathsf{I} \delta_{\theta_{\beta}} {\mathrm{Br}}_{2} \, \mathbb{J}_m \mathsf{I} \delta_{u^{\beta}} {H}_{\alpha,p} \int f_l f_j f_k \omega^{km} = \int dx \, \mathbb{J}_l \mathsf{I} \big( \delta_{\theta_{\beta}} {\mathrm{Br}}_{2} \delta_{u^{\beta}} {H}_{\alpha,p} \big).
\end{align}
In a similar way, the right hand side of \eqref{eq:biham-2} is equal to
\begin{align} \label{eq:biham-2-expanded}
\int dx \, \mathbb{J}_l \mathsf{I} \Big( \delta_{\theta_{\beta}} {\mathrm{Br}}_{1} \delta_{u^{\beta}} {H}_{\mu,p+1} {\Big( p+ \frac 52 -\frac{\sfD} 2  - \sfQ \Big)^{\mu}}_{\alpha} \Big),
\end{align}
and the right hand side of \eqref{eq:biham-3} is equal to
\begin{align} \label{eq:biham-3-expanded}
\int dx \, \mathbb{J}_l \mathsf{I} \big( \delta_{\theta_{\beta}} {\mathrm{Br}}_{1} \delta_{u^{\beta}} {H}_{\mu,p} \eta^{\mu\beta}\sfB^{\gamma} C_{\alpha\beta\gamma} \big).
\end{align}
The result of~\cite[Thm.~3.4]{Buryak-Rossi-proof-bracket} implies that 
\begin{align} 
[\overline{\mathrm{Br}}_{2},\overline H_{\alpha,p}]- [\overline{\mathrm{Br}}_{1},\overline H_{\mu,p+1}] {\Big( p+ \frac 52 -\frac\sfD 2  - \sfQ \Big)^\mu}_\alpha - [\overline{\mathrm{Br}}_{1},\overline H_{\mu,p}] \eta^{\mu\beta} \sfB^\gamma C_{\alpha\beta\gamma} = 0    
\end{align}
for any semisimple homogeneous cohomological field theory, that is, 
\begin{align}
    \delta_{\theta_{\beta}} {\mathrm{Br}}_{2} \delta_{u^{\beta}} {H}_{\alpha,p} - \delta_{\theta_{\beta}} {\mathrm{Br}}_{1} \delta_{u^{\beta}} {H}_{\mu,p+1} {\Big( p+ \frac 52 -\frac{\sfD} 2  - \sfQ \Big)^{\mu}}_{\alpha} - \delta_{\theta_{\beta}} {\mathrm{Br}}_{1} \delta_{u^{\beta}} {H}_{\mu,p} \eta^{\mu\beta}\sfB^{\gamma} C_{\alpha\beta\gamma} \in \partial_x \hat{ \mathcal{A}}
\end{align}
for any $\alpha=1,\dots,N$ $p=0,1,\dots$. Since the map $\mathbb{J}_l \mathsf{I}\colon \hat{ \mathcal{A}}\to {}^A\!\hat{ \mathcal{A}} $ commutes with $\partial_x$, we have
\begin{align}
    \mathbb{J}_l \mathsf{I} \Big( \delta_{\theta_{\beta}} {\mathrm{Br}}_{2} \delta_{u^{\beta}} {H}_{\alpha,p} - \delta_{\theta_{\beta}} {\mathrm{Br}}_{1} \delta_{u^{\beta}} {H}_{\mu,p+1} {\Big( p+ \frac 52 -\frac{\sfD} 2  - \sfQ \Big)^{\mu}}_{\alpha} - \delta_{\theta_{\beta}} {\mathrm{Br}}_{1} \delta_{u^{\beta}} {H}_{\mu,p} \eta^{\mu\beta}\sfB^{\gamma} C_{\alpha\beta\gamma}  \Big) \in \partial_x {}^A\!\hat{ \mathcal{A}}.
\end{align}
This implies that $\eqref{eq:biham-1-expanded}-\eqref{eq:biham-2-expanded}-\eqref{eq:biham-3-expanded}=0$, and hence $\eqref{eq:biham-1}-\eqref{eq:biham-2}-\eqref{eq:biham-3}=0$, which completes the proof of the bi-Hamiltonian recursion~\eqref{eq:bi-Ham-reduction} for any P-CohFT obtained by a lift from a homogenenous semisimple cohomological field theory. 

\printbibliography

\end{document}